\def\TRIA{{\mbox{$\mbox{}_{\triangle}$}}}
\def\HEXA{{\mbox{$\mbox{}_{\mbox{\scriptsize \hexagon}}$}}}
\def\QUAD{{\mbox{$\mbox{}_{\Box}$}}}
\title{Exploring Simple Triangular and Hexagonal Grid Polygons Online}
\author{Daniel Herrmann\thanks{%
        University of Bonn,
        Institute of Computer Science, Dept.~I,
        R\"omerstr.~164,
        53117~Bonn, Germany.
}
\and    Tom Kamphans\thanks{%
Braunschweig University of Technology,
Computer Science, Algorithms Group,
M{\"u}hlenpfordtstra{\ss}e 23,
38106 Braunschweig, Germany
}
\and    Elmar Langetepe\footnotemark[1]
}
\date{December 2007\\[5pt]}
\begin{document}
\maketitle

{\pagestyle{empty}
\mbox{}\newpage

\begin{abstract}
We investigate the online exploration problem (aka covering)
of a short-sighted mobile robot
moving in an unknown cellular environment with
hexagons and triangles as types of cells. To explore a
cell, the robot must enter it. Once inside, the robot knows which
of the 3 or 6 adjacent cells exist and which are boundary edges.
The robot's task is to visit every cell in the given environment
and to return to the start. Our interest
is in a short exploration tour; that is, in keeping the number
of multiple cell visits small.
For arbitrary environments containing no obstacles
we provide a strategy producing tours of length
$S \leq C + \frac14 E - 2.5$
for hexagonal grids, and
$S \leq C + E - 4$
for triangular grids.
$C$ denotes the number of cells---the area---, $E$ denotes
the number of boundary edges---the perimeter---of the given environment.
Further, we show that our strategy is $\frac43$-competitive in both
types of grids, and we provide lower bounds of $\frac{14}{13}$ for
hexagonal grids and $\frac76$ for triangular grids.

\smallskip
The strategies were implemented in a Java applet \cite{hiklm-gaesu-00}
that can be found in

\centerline{\tt http://www.geometrylab.de/Gridrobot/}

\medskip

\noindent
{\bf Key words:}
Robot motion planning, exploration, covering, online algorithms, 
competitive analysis, grid graphs
\end{abstract}

\mbox{}\newpage\mbox{}
\clearpage
}
\pagenumbering{arabic}

\sect{Introduction}{intro}
Exploring an unknown environment 
is one of the basic tasks of autonomous
mobile robots and has received a lot of attention
in computational geometry and in robotics; 
see, for example, \cite{dkp-hlue1-98,hikk-pep-01,m-spn-97,m-gspno-00,rksi-rnut-93,clhkbk-prmta-05,b-olsn-98,bs-eewl-03,dmr-mrcre-01}---just to
mention a few of these works.

For some applications, it is convenient to subdivide the given environment
by a regular grid into basic blocks (so-called {\em cells}). For example,
the agent's vision may be limited and a cell is used as to approach the
visibility range. Or the agent has to visit every part of the environment
for cleaning or lawn mowing, and a cell is an approximation of the 
robot's tool (sometimes, this task is called {\em covering}).
The robot's position is always given by the cell currently occupied by the
robot. From its current position, the robot
can enter one of the neighboring {\em free} cells (i.e., cells that are
not blocked by an obstacle).
The whole environment is not known in advance---so we are dealing with
{\em online} strategies---, but once inside a cell, the robot knows
which neighboring cell is blocked and which one is free.
The robot's task is to visit every free cell
inside the given environment and to return to the start.
% see~\reffig{figs/RDFSExploreExample}(i) for an example
% This example shows a tour that visits every cell once and some cells even more
% often. Our interest is in a short exploration tour, so we would like to keep
% the number of excess cell visits small.
There are only three possible regular tilings of the plane:
square, hexagonal, or triangular subdivisions \cite{bc-mre-87}. 
We call a subdivision of the
given environment into squares (hexagons, triangles) a square polygon 
(hexagonal polygon, triangular polygon; respectively).
Hexagonal cells are a matter of particular interest
for robots that are equipped with a circular tool such as lawn mowers, 
because hexagonal grids provide a better approximation for the tool than
square grids \cite{afm-aalmm-00}.

In a square polygon with obstacles, the offline problem (i.e.,
finding a minimum length tour that visits every cell)
is known to be NP-hard, by work of Itai et
al.~\cite{ips-hpgg-82}. By modeling the environment as a {\em grid graph}
with one vertex for every cell and edges between neighboring cells,
we can use $1+\epsilon$ approximation schemes for Euclidean TSP
by Grigni et al.~\cite{gkp-aspgt-95}, Arora~\cite{a-ptase-96},
and Mitchell~\cite{m-gsaps-96}. For square polygons there is a
$\frac{53}{40}$ approximation by Arkin et al.~\cite{afm-aalmm-00}.

In a square polygon without obstacles, the complexity of constructing
off\-line a minimum length tour is still open.
Ntafos~\cite{n-wrlv-92} and Arkin et al.~\cite{afm-aalmm-00}
have shown how to approximate the minimum length tour with
factors of $\frac{4}{3}$ and $\frac{6}{5}$, respectively.
Umans and Lenhart~\cite{ul-hcsgg-97} have provided an $O(C^4)$ 
algorithm for deciding if there exists a Hamiltonian cycle 
(i.e., a tour that visits each of the $C$ cells
of a polygon {\em exactly} once).
For the related problem of Hamiltonian paths (i.e., different
start and end positions), Everett~\cite{e-hpnrg-86} has given a polynomial algorithm
for certain grid graphs. Cho and Zelikovsky~\cite{hz-scthc-95}
studied {\em spanning closed trails}.
Hamiltonian cycles on triangular and hexagonal grids were studied by
Polishuk et al.~\cite{pam-hctg-06,amp-tnchg-07},
and Islam et al.~\cite{imrrx-hchgg-07}, see also
\cite{afimmmprx-nbsts-07}.

In this paper, our interest is in the online version of the cell exploration
problem for {\em hexagonal} and {\em triangular} polygons without holes. 
The task of exploring square polygons with 
holes was independently considered by Gabriely and Rimon
\cite{gr-colcg-03} and
Icking et al.~\cite{ikkl-esgp-05}, see also Kamphans~\cite{k-maole-05}.
Our exploration strategy is based on the strategy {\em SmartDFS} by
Icking et al.~\cite{ikkl-egpol-05} for simple polygons. This strategy is
$\frac43$-competitive%
\footnote{That is, the path produced by this online strategy is never longer 
than $\frac43$ times the optimal offline path.}
and the number of
steps from cell to cell is bounded by $C+ \frac12E -3$, where $C$ denotes
the number of cells (i.e., the polygon's area) and $E$ the number of edges
(the polygon's perimeter). Further, there is a lower bound of $\frac76$ 
on the competitive factor for this problem.

Another online task is the {\em piecemeal exploration}, 
where the robot has to interrupt the exploration every 
now and then so as to return to the
start point, for example, to refuel.
Piecemeal exploration of grid graphs was studied by
Betke et~al.~\cite{brs-plue-95} and Albers et~al.~\cite{aks-eueo-02}.
Note that their objective is to visit every node {\em and} every edge,
whereas we require a complete coverage of only the cells.
Subdividing the robot's environment into grid cells is used also
in the robotics community, see, for example,
Moravec and Elfes~\cite{me-hrmwa-85},
Elfes~\cite{e-uogmr-89},
Bruckstein et al.~\cite{blw-dcaru-99,wlb-mvpdr-00},
and Koenig and Liu \cite{kl-tcars-01}.
See also the survey by Choset \cite{c-crsrr-01}.

Our paper is organized as follows:
In \refsect{defi}, we give more detailed description of our explorer and the
environment. We give lower bounds on the competitive factor
in \refsect{lowerbounds}.
In \refsect{smartdfs}, we present an exploration strategy for simple 
polygons. We analyze the performance of this strategy in hexagonal polygons
in \refsect{hexanalysis} 
and for triangular polygons in \refsect{trianalysis}.

\sect{Definitions}{defi}
\pstexfig{(i) Polygon with 23 cells, 38 edges and one(!) hole (black cells), 
a path from $s$ to $t$ of length 6
(ii)-(iv) neighboring and touching cells; the agent 
can determine which of the neighboring cells (marked by an arrow) are
free, and enter an adjacent free cell.}
{figs/CellPoly}

\begin{defi}{gridpoly}
We consider polygons that are subdivided by a regular grid. 
A \DEFI{cell}~ is a basic block in our environment.
A cell is either \DEFI{free} and can be visited by the
robot, or \DEFI{blocked}  (\IE, unaccessible for the robot).%
\footnote{In the following, we sometimes use the terms 
{\em free cells} and {\em cells} synonymously.}
We call two cells 
\DEFI{adjacent} or \DEFI{neighboring} if they share a common edge,
and \DEFI{touching} if they share only a common corner.

A \DEFI{path}, $\pfad$, from a cell $s$ to a cell $t$ is 
a sequence of free cells $s=c_1, \ldots , c_n=t$ 
where $c_i$ and $c_{i+1}$
are adjacent for $i=1,\ldots,n-1$.
Let $\len{\pfad}$ denote the length of $\pfad$.
%Note every path $\pfad$ visits at most cells $\len{\pfad}-1$ cells.
We assume that the cells have unit size, so the length of the path
is equal to the number of \DEFI{steps} from cell to cell that the robot walks.

A \DEFI{grid polygon}, $P$, is a path-connected set of free cells;
that is, for every $c_1, c_2\in P$ exists a path from $c_1$ to $c_2$
that lies completely in $P$. We denote a grid polygon subdivided
into square, hexagonal, or triangular cells by
$P\!\QUAD$, $P\!\HEXA$, or $P\!\TRIA$, respectively.

We call a set of touching blocked cells that are
completely surrounded by free cells an \DEFI{obstacle}
or \DEFI{hole}; see \reffig{figs/CellPoly}.
Polygons without holes are called \DEFI{simple polygon}\DEFI{s}.
\end{defi}

\pstexfig{The perimeter, $E$, is used to distinguish between {\em thin} and
{\em thick} environments.}{figs/fleshyskinny}

We analyze the performance of an exploration strategy using
some parameters of the grid polygon.
% Obviously, the area of the polygon is an appropriate parameter. 
In addition to the area, $C$, of a polygon we use
the {\em perimeter}, $E$. The parameter $C$ is the number of free cells and
$E$ is the total number of edges that appear between a free cell and a blocked
cell; see, for example, \reffig{figs/CellPoly} 
or \reffig{figs/fleshyskinny}.
We use the perimeter, $E$, to distinguish between thin environments
that have many corridors of width 1, and thick environments that
have wider areas.
In the following sections we present strategies
that explore grid polygons using no more than
roughly $C+\frac14 E$ steps (hexagons) and $C+E$ (triangles).
Since all cells in the environment have to be visited, $C$ is 
a lower bound on the
number of steps that are needed to explore the whole polygon and
to return to $s$.
Thus,  the number of edges (or a fraction of them)
is an upper bound for the number of additional cell visits.
For thick environments, the value of $E$ is in $O(\sqrt{C})$, so that 
the number of additional cell visits is
substantially smaller than the number of free cells. Only for polygons
that do not contain any $2\times2$ square of free cells,
$E$ achieves its maximum value of $2(C+1)$. But in thoses cases, no
online strategy can do better; even the optimal path has this length.

\medskip

We will see that our strategy SmartDFS
explores the polygon in layers: Beginning with the cells along the
boundary, the agent proceeds towards the interior of $P$. 
Thus, we number the single layers:

\begin{defi}{layers} Let $P$ be a (simple) grid polygon
(of either type). The 
boundary cells of $P$ uniquely define the {\em first layer} of $P$. 
The polygon 
$P$ without its first layer is called the {\em 1-offset} of $P$. 
The \ith{$\ell$} layer and the $\ell$-offset of $P$ are defined 
successively; see \reffig{figs/doffset}.
\end{defi}

\pstexfig{The 2-offset (shaded) of a grid polygon $P$.}{figs/doffset}

Note that the $\ell$-offset of a polygon $P$ is not necessarily
connected.

\sect{Lower Bounds}{lowerbounds}
In an online setting, the agent does not know the environment in advance.
So we are intested in the best competitive factor we can expect for a
strategy that visits every cell at least once and returns to the 
start cell.
In an environment {\em with} holes, we have the following theorem:

\begin{theo}{ExplCompl}
The competitive complexity of exploring an unknown grid polygon with
obstacles and hexagonal or triangular cells is 2.
\end{theo}
\begin{proof}
We can simply adapt the lower bound construction for square cells by
Icking et al.~\cite{ikkl-esgp-05}, yielding a lower bound of 2.
On the other hand, we can apply a simple depth-first search, resulting
in a tour with $2C-2$ steps.
The shortest tour needs at least $C$ steps to visit all cells
and to return to $s$, so DFS is competitive with a factor of 2.
\end{proof}

Surprisingly, we cannot trim the 
lower bound construction by Icking et al.~\cite{ikkl-esgp-05}
for simple polygons with hexagonal or triangular cells. 
The lower bound construction for polygons with holes uses 
only corridors of width 1, so the
type of cells does not matter. In contrast, the construction 
for simple polygons uses wider areas, where the number of neighboring cells
plays a major role. However, the lower bounds for squares and triangles
are identical:

\pstexfig{A lower bound on the exploration of simple triangular polygons.
The thin dashed lines show the optimal solution, the bold dashed triangles
denotes the start cell of the next block.}
{figs/triLowerBound}

\begin{theo}{triLB}
There is no online strategy
for the exploration of simple triangular 
grid polygons with a competitive factor better than
${7\over 6}$.
\end{theo}
\begin{proof}
Let the agent start in a cell with two neighbors, one to the south and one to 
the northwest, see \reffig{figs/triLowerBound}(i). If it walks to the south,
we add a cell such that the only possible step is to the southwest;
see \reffig{figs/triLowerBound}(ii).
If it walks from the start to the east,
 we force it to move another step to the east; see
\reffig{figs/triLowerBound}(iii).
In both cases, the agent has the choice to leave the polygon's boundary
(\reffig{figs/triLowerBound}(iv) and (vi)) or to follow the polygon's 
boundary (\reffig{figs/triLowerBound}(v) and (vii)). 
In either case, we fix the polygon after this step.
If the agent leaves the
boundary, it needs at least 12 steps while the optimal path has
10 steps. In the other case, the agent needs at least 26 steps; the optimal
path in this polygon has a length of 22 steps.

To construct arbitrarily large polygons, we use more of these blocks 
and glue them together using the cell that is shown with bold dashed lines 
in the figure and
denotes the start cell of the next block. Unfortunately, both the online
strategy and the optimal path need two additional steps for the 
transition between two blocks. Let $n$ denote the number of 
blocks, then we have in the best case a ratio of
${26 + 28(n-1) \over 22+24(n-1)}$, which converges to $\frac76$ if $n$ goes
to infinity.
\end{proof}

\bigskip
The lower bound construction for hexagonal polygons is simpler, but
yields a smaller value.

\pstexfig{A lower bound on the exploration of simple polygons.
The dashed lines show the optimal solution.}
{figs/hexLowerBound}

\begin{theo}{hexLB}
There is no online strategy
for the exploration of simple hexagonal 
grid polygons with a competitive factor better than
${14\over 13}$.
\end{theo}
\begin{proof}
We start in a cell with four neighboring cells, see 
\reffig{figs/hexLowerBound}(i). The agent may leave the polygon's boundary
by walking northwest or southwest, or follow the boundary by walking
north or south. In the first case, we close the block as shown in
\reffig{figs/hexLowerBound}(ii), in the second case as shown in
\reffig{figs/hexLowerBound}(iii), yielding a ratio of 
$\frac76$ or $\frac{13}{12}$, respectively.

As in the preceeding proof, we construct polygons of arbitrary size by
concatenating the blocks from \reffig{figs/hexLowerBound}(ii)
and \reffig{figs/hexLowerBound}(iii).
A subsequent block attaches using the cell(s) shown with bold, dashed lines.
Again, we need one or two additional steps for the transition, yielding
a best-case ratio of
$13+14(n-1) \over 12+13(n-1)$, where $n$ denotes the number of blocks.
This ratio converges to $\frac{14}{13}\approx 1.076$.
\end{proof}

\sect{Exploring Simple Polygons}{smartdfs}
In this section, we briefly describe the strategy 
{\em SmartDFS}\TRIA,\HEXA.
Our strategy is based on the same ideas as
SmartDFS\QUAD \cite{ikkl-esgp-05}, but it is generalized for
trianguar and hexagonal grids.

The basic idea is to use a simple DFS strategy as shown in \refalgo{dfs}.%
\footnote{The command \Befehl{move{\em (dir)}} executes the actual motion of
the agent.
The function \Befehl{un\-ex\-plored{\em (dir)}} returns
true, if the cell in the given direction seen from the agent's 
current position is not yet visited, and false otherwise.
Given a direction {\em dir}, \Befehl{reverse{\rm (dir)}} returns the direction
turned by $180^\circ$.}
From the current position, the explorer
tries to visit the adjacent cells in clockwise order, see the procedure 
{\em ExploreCell}.
If the adjacent cell is still unexplored, the agent enters this cell,
proceeds recursively with the exploration, and walks back, see the procedure
{\em ExploreStep}. 
Altogether, the polygon is explored following the {\em left-hand rule}:
The agent always keeps the
polygon's boundary or the explored cells on its left side.

%------------------------------------------------------------
\begin{algorithm}
\AlgoCaption{DFS}{dfs}
\begin{description}
\item[DFS(\VAR{P}, \VAR{start}):]~
\begin{algorithmic}
\STATE Choose direction \VAR{dir} such that the cell behind 
\STATE\mbox{}\quad the explorer is blocked;
\STATE ExploreCell(\VAR{dir});
\end{algorithmic}

\item[ExploreCell(\VAR{dir}):]~
\begin{algorithmic}
  \STATE \COMMENT{Left-Hand Rule:}
%  \STATE ExploreStep(ccw(\VAR{dir}));
%  \STATE ExploreStep(\VAR{dir});
%  \STATE ExploreStep(cw(\VAR{dir}));
  \FORALL{ cells $c$ adjacent to the current cell, in clockwise order starting with the cell opposite to \VAR{dir} } 
  \STATE $\VAR{newdir} :=$ Direction towards $c$;
  \STATE ExploreStep(\VAR{newdir});
  \ENDFOR
\end{algorithmic}

\item[ExploreStep(\VAR{dir}):]~
\begin{algorithmic}
\IF{unexplored(\VAR{dir})}
  \STATE move(\VAR{dir});
  \STATE ExploreCell(\VAR{dir});
  \STATE move(reverse(\VAR{dir}));
\ENDIF
\end{algorithmic}
\end{description}
\end{algorithm}
%------------------------------------------------------------

Obviously, all cells are visited, because the polygon is connected; and
the whole path consists of $2C-2$ steps, because each cell---except
for the start---is entered exactly once by the first \Befehl{move} statement,
and left exactly once by the second \Befehl{move} statement
in the procedure {\em ExploreStep}.

\pstexfig{First improvement to DFS: Return directly to those cells that
still have unexplored neighbors.}{figs/dfsverbesserung1}

The first improvement to the simple DFS is to return directly to those
cells that have unexplored neighbors. 
See, for example, \reffig{figs/dfsverbesserung1}:
After the agent has reached the cell $c_1$, DFS walks to $c_2$ through the
completely explored corridor of width 2. A more efficient return path
walks on a shortest path from $c_1$ to $c_2$. Note that the agent
can use for this shortest path only cells that are already known.
With this modification, the agent's position
might change between two calls of {\em ExploreStep}. Therefore, the procedure 
{\em ExploreCell} has to store the current position, and the agent has to
walk on the shortest path to this cell, see the procedure {\em ExploreStep}
in \refalgo{smartdfs}.%
\footnote{
The function \Befehl{unexplored{\em (cell}, {\em dir)}} returns true, if the
cell in direction {\em dir} from {\em cell} is not yet visited.}

\pstexfig{Second improvement to DFS: Detect polygon splits.}
{figs/dfsverbesserung2}

Now, observe the polygon shown in \reffig{figs/dfsverbesserung2}. 
DFS completely surrounds the polygon, returns to $c_2$ and explores
the left part of the polygon. Then, it walks to $c_1$ and
explores the right part. Altogether, the agent walks four times
through the narrow corridor. A more clever solution would
explore the right part immediately after the first visit of $c_1$, and
continue with the left part afterwards. This solution would walk
only two times through the corridor in the middle! 
The cell $c_1$ has the property that the graph of unvisited cells
splits into two components after $c_1$ is explored. We call cells
like this \DEFI{split cell}\DEFI{s}. 
The second improvement to DFS is to recognize split cells and 
diverge from the left-hand rule when a split cell is detected.
Essentially, we want to split the set of cells into several
components, which are finished in the reversed order
of their distances to the start cell.
The detection and handling of split cells is specified in 
\refsect{hexanalysis} and \refsect{trianalysis}.
\refalgo{smartdfs} resumes both improvements to DFS.

\pstexfig{Straightforward strategies are not better than SmartDFS.}
{figs/SimpleStrats}

Note that the straightforward strategy {\em Visit all boundary cells and
calculate the optimal offline path for the rest of the polygon}
does not achieve a competitive factor better than $2$. For example, in 
\reffig{figs/SimpleStrats}(i) this strategy visits almost every 
boundary cell twice, whereas SmartDFS visits only one cell twice.
Even if we extend the simple strategy to detect split cells while visiting
the boundary cells, we can not
achieve a factor better than $\frac43$. A lower bound on the
performace of this strategy is a corridor of width 3, see
\reffig{figs/SimpleStrats}(ii).
Moreover, it is not known whether the offline solution is NP-hard for
simple polygons.

%------------------------------------------------------------
\begin{algorithm}
\AlgoCaption{SmartDFS}{smartdfs}
\begin{description}
\item[SmartDFS(\VAR{P}, \VAR{start}):]~
\begin{algorithmic}
\STATE Choose direction \VAR{dir} such that the cell behind 
\STATE\mbox{}\quad the explorer is blocked;
\STATE ExploreCell(\VAR{dir});
\STATE Walk on the shortest path to the start cell;
\end{algorithmic}

\item[ExploreCell(\VAR{dir}):]~
\begin{algorithmic}
\STATE Mark the current cell with the number of the current layer;
\STATE \VAR{base} $:=$ {\rm current position};
\IF{not isSplitCell(\VAR{base})}
  \STATE \COMMENT{Left-Hand Rule:}
  % \STATE ExploreStep(\VAR{base}, ccw(\VAR{dir}));
  % \STATE ExploreStep(\VAR{base}, \VAR{dir});
  % \STATE ExploreStep(\VAR{base}, cw(\VAR{dir}));
  \FORALL{ cells $c$ adjacent to the current cell, in clockwise order starting with the cell opposite to \VAR{dir} } 
  \STATE $\VAR{newdir} :=$ Direction towards $c$;
  \STATE ExploreStep(\VAR{base}, \VAR{newdir});
  \ENDFOR
\ELSE
  \STATE \COMMENT{choose different order, see \refpage{comporder}}\,ff
  \STATE Determine the types of the components using the layer numbers
  \STATE ~~~of the surrounding cells;
  \IF { No component of type III exists }
     \STATE Use the left-hand rule, but omit the first possible step.
  \ELSE
     \STATE Visit the component of type III at last.
  \ENDIF
\ENDIF
\end{algorithmic}

\item[ExploreStep(\VAR{base}, \VAR{dir}):]~
\begin{algorithmic}
\IF{unexplored(\VAR{base}, \VAR{dir})}
  \STATE Walk on shortest path using known cells to $base$;
  \STATE move(\VAR{dir});
  \STATE ExploreCell(\VAR{dir});
\ENDIF
\end{algorithmic}
\end{description}
\end{algorithm}
%------------------------------------------------------------

\ssect{The Analysis of SmartDFS\HEXA}{hexanalysis}
In this section, we analyze the performance of our strategy in
a hexagonal grid.
We start with an important property of the $\ell$-offset:

\begin{lem}{hexoffset}
The $\ell$-offset of a simple, hexagonal grid 
polygon, $P\!\HEXA$, has at least $12\ell$ edges fewer than $P\!\HEXA$.%
\footnote{Provided that the $\ell$-offset is not empty anyway.}
\end{lem}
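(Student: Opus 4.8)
The plan is to reduce the claim about an arbitrary $\ell$-offset to the base case $\ell=1$ and then iterate. So the heart of the matter is the statement: if $P\!\HEXA$ is a simple hexagonal grid polygon and its $1$-offset $P'$ is non-empty, then $E(P') \le E(P) - 12$, where $E(\cdot)$ denotes the perimeter. Once this is established, applying it $\ell$ times — noting that every intermediate $j$-offset for $j<\ell$ is itself a simple hexagonal grid polygon (or a disjoint union of such, to which the same counting applies componentwise) and is non-empty because the $\ell$-offset is non-empty — yields $E(\ell\text{-offset}) \le E(P) - 12\ell$.

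For the base case I would argue combinatorially about the boundary cells, i.e.\ the first layer. The first layer is a closed "ring" of hexagons surrounding $P'$. I would walk around the outer boundary of $P$ and classify each boundary cell by how many of its six edges are exposed (shared with a blocked cell or the outside) versus how many face inward toward $P'$. The key local fact for hexagons is a discrete Gauss--Bonnet / turning-angle identity: as one traverses the boundary cycle of the first layer, the total "turning" is fixed (the outer boundary contributes a net turn of $+6$ units of $60^\circ$, i.e.\ a full $360^\circ$, since there are no holes), and each boundary cell contributes to this turning according to its exposed-edge count. Summing the per-cell contributions, the difference $E(P) - E(P')$ can be written as a constant ($12$) plus a sum of non-negative local terms coming from convex corners of the boundary, with concave corners contributing the constant offset — exactly the hexagonal analogue of the familiar fact that a simple grid polygon has at least $4$ more boundary "convex-corner units" than concave ones. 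I expect the clean way to package this is: $E(P) - E(P') = 12 + (\text{something}\ge 0)$, hence $\ge 12$.

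Concretely, the steps in order: (1) show the first layer is well-defined and separates $P$ from $P'$, and that $\partial P'$ consists exactly of the inward-facing edges of first-layer cells; (2) set up the boundary traversal of $P$ and the induced traversal of $\partial P'$, establishing a correspondence between boundary cells of $P$ and the edges/corners they contribute to $\partial P'$; (3) run the turning-number (Gauss--Bonnet) argument on the hexagonal grid to get the exact identity $E(P)-E(P') = 12 + \sum(\text{convex defect terms})$; (4) conclude $E(P)-E(P')\ge 12$ for the non-empty $1$-offset; (5) induct on $\ell$, handling possibly-disconnected offsets componentwise (each component is still a simple hexagonal polygon, so step 4 applies, and perimeters add).

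**Anticipated main obstacle.** The delicate part is step (3): making the local bookkeeping on the hexagonal grid precise. Unlike the square grid, a hexagonal boundary cell can have many different exposed-edge patterns, and I must verify that for every possible local configuration the contribution to $E(P)-E(P')$ is bounded below correctly and that the global turning sums to exactly $+360^\circ$ precisely because $P$ has no holes (this is where simplicity is used — a hole would subtract $360^\circ$ and destroy the bound). I would also need to be a little careful about thin parts of the first layer (cells with two or more non-adjacent exposed sides, "pinch points") to be sure the correspondence in step (2) is still well-defined and the inequality is not violated there; checking the finite list of local hexagon configurations is the routine-but-essential computation I would carry out to close the argument.
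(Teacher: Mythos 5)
Your proposal is correct and follows essentially the same route as the paper: a turning-number (discrete Gauss--Bonnet) argument showing that the boundary cycle of the first layer has six more $60^\circ$ right turns than left turns, each net right turn costing the offset at least two edges, for a total loss of at least $12$. The only difference is that you prove the case $\ell=1$ and iterate (handling disconnected intermediate offsets componentwise), whereas the paper runs the same count once with $2\ell$ edges per turn after cutting off passages narrower than $2\ell$; both variants hinge on the same careful bookkeeping at thin parts and dead ends, which you correctly single out as the delicate step.
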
 

\pstexfig{The 2-offset (shaded) of a grid polygon $P\!\HEXA$.}
{figs/hexdoffset}

\begin{proof}
First, we cut the parts of the polygon
$P\!\HEXA$ that do not affect the $\ell$-offset.
Now, we observe the closed path, $\pfad$, that connects the midpoints
of the boundary cells of the remaining polygon in clockwise order; see
\reffig{figs/hexdoffset}. For every $60^\circ$ left turn the offset gains 
at most $2\ell$ edges and for every $60^\circ$ right turn the offset 
looses at least $2\ell$ edges (cutting off the passages that are 
narrower than $2\ell$ ensures that we have enough edges to loose at 
this point). 

\pstexfig{(i) A convex hexagonal grid polygon with six $60^\circ$
right turns in the first layer, (ii) adding two $60^\circ$ left turns 
forces adding to $60^\circ$ right turns.}
{figs/hexConvReflex}

It is easy to see that there are six more $60^\circ$ 
right turns than left turns (we count $120^\circ$ turn as two $60^\circ$
turns): In a convex polygon, the tour along the first layer has
six $60^\circ$ right turns. 
For every $60^\circ$ left turn that we add to the polygon, we
also add another $60^\circ$ right turns, see \reffig{figs/hexConvReflex}.
Altogether, we loose at least 12 edges for every layer.
\end{proof}

\pstexfig{A decomposition of $P\!\HEXA$ at the split cell $c$ and 
its handling in SmartDFS\HEXA.}
{figs/hexDecomposition1}

% \refdefi{layers} allows us to specify the detection and handling of
% a split cell in \smartDFS. 
% We start with the handling of a
% split cell and defer split cell detection.

%
% Motivation/Example of split cell handling
%
Now, let us consider the handling of a split cell. Observe a
situation as shown in \reffig{figs/hexDecomposition1}(i):
SmartDFS\HEXA\ has just met the
first split cell, $c$, in the second layer of $P\!\HEXA$. 
$P\!\HEXA$ divides into three parts:
$$P\!\HEXA = K_1 \disjoint K_2 \disjoint\ \{\,\mbox{visited cells of } P\!\HEXA\,\},$$
where $K_1$ and $K_2$ denote the connected components of the set of unvisited
cells.
In this case it is reasonable to explore the component $K_2$ first, because
the start cell $s$ is closer to $K_1$; that is, we can extend
$K_1$ with $\ell$ layers, such that the resulting polygon contains 
the start cell $s$.

%
% Idea of splitting P into P1/P2 
%
More generally, we want to divide our polygon $P\!\HEXA$ into two parts, $P_1$
and $P_2$, such that each of them is an extension of the two
components. Both polygons overlap in the area around the split cell $c$.
At least one of these polygons contains the start cell. If only one
of the polygons contains $s$, we want our strategy to 
explore this part at last, expecting that in this part the path from the 
last visited cell
back to $s$ is the shorter than in the other part.
Vice versa, if there is a polygon that does {\em not} contain $s$, we explore 
the corresponding component first. In \reffig{figs/hexDecomposition1},
SmartDFS\HEXA\ recursively enters $K_2$, returns to 
the split cell $c$, and explores the component $K_1$ next.

%
% Sucessive start points/ More than two components
%
In the preceding example, there is only one split cell in $P\!\HEXA$, but
in general there will may be a sequence of split cells, $c_1,\ldots,c_k$.
In this case, we apply the handling of split cells in a recursive way;
that is, if a split cell $c_{i+1}, 1\leq i< k$, is detected in one of the two 
components occurring at $c_i$ we proceed the same way as described earlier. 
If another split cell occurs in $K_2$, 
the role of the start cell is played by the preceding 
split cell $c_i$. In the following, the term {\em start cell}\/ always
refers to the start cell of the current component; that is, either
to $s$ or to the previously detected split cell. 
Note that---in contrast to square grid polygons---there is no case where
three components arise at a split cell apart from the start cell $s$.
%Further, it may occur that three components arise at a split cell, see
%\reffig{figCells/splitcell2}(i) on \refpagefig{figCells/splitcell2}.
%We handle this case as two successive polygon splits
%occurring at the same split cell.

%
% Using layer numbers
%

\pstexfig{Several types of components.}{figs/hexSplitcell}

\paragraph*{Visiting Order}~\\
We use the layer numbers to decide which component we have 
to visit at last. Whenever a split cell occurs in layer $\ell$,
every component is one of the following\labelpage{comporder}
types, see \reffig{figs/hexSplitcell}:

\begin{romanlist}
\item[I.] $K_i$ is {\em completely} surrounded by layer $\ell$%
\footnote{More precisely, the part of layer $\ell$ that surrounds $K_i$ is
completely visited. For convenience, we use the slightly sloppy, but shorter 
form.}
\labelpage{comptypes}
\item[II.] $K_i$ is {\em not} surrounded by layer $\ell$
\item[III.] $K_i$ is {\em partially} surrounded by layer $\ell$
\end{romanlist}

% \pstexfig{Switching the current layer.}{figCells/layerswitch}

\medskip
It is reasonable to explore the component of type III at last:
There are two cases in which SmartDFS\HEXA\ switches from a layer $\ell-1$ 
to layer $\ell$. Either it reaches the first cell of
layer $\ell-1$ in the current component and thus passes the start 
cell of the current component%
%---see, for example, the switch from layer 1 to layer 2 in 
%\reffig{figCells/layerswitch}---
, or it hits another cell of 
layer $\ell-1$ but no polygon split occurs.
%, such as the 
% switch from layer 2 to layer 3 in in \reffig{figCells/layerswitch}.
In the second case, the considered start cell must be located in a 
corridor that is completely explored; otherwise, the strategy would
be able to reach the first cell of layer $\ell-1$ as in the first case.
In both cases the part of $P\!\HEXA$ surrounding a component of type III 
contains the first cell of the current layer $\ell$
as well as the start cell.
Thus, we explore a component of type~III at first, provided that such a 
component exists.

\bigskip
Unfortunately, there are two cases in which no component of type III exists:
\begin{enumerate}
\item The part of the polygon that contains the preceding start cell is
  explored completely, see for example \reffig{figs/hexSplitcell2}(i).
  In this case the order of the components makes no difference.
%
%  \footnote{In \reffig{figs/hexSplitcell2}(i) we gain two steps, if we
%    explore the part left to the splitcell at last and 
%    do not return to the split cell after this part is completely
%    explored, but return immediately to the start cell. But
%    decisions like this require facts of much more global type 
%    than we consider up to now.
%    However, for the analysis of our strategy and the upper bound
%    shortcuts like this do not matter.}

\item Both components are completely surrounded by a layer,
  because the polygon split and the switch from one layer to the
  next occurs within the same cell, see \reffig{figs/hexSplitcell2}(ii).
  A step that follows the left-hand rule will move towards the
  start cell, so we just omit the first possible. 
% More precisely, if the
%  the explorer can walk to the left, we prefer a step forward
%  to a step to the right. If the agent cannot walk to the left
%  but straight forward, we proceed with a step to the right.
\end{enumerate}

We proceed
with the rule in case~2 whenever there is no component of type III,
because the order in case 1 does not make a difference.

\pstexfig{No component of type III exists.}{figs/hexSplitcell2}

%
% Definition of Q, P1 and P2
%
\clearpage %@@@
\sssect{An Upper Bound on the Number of Steps}{ubhex}
For the analysis of our strategy we consider two 
polygons, $P_1$ and $P_2$, as follows.
Let $Q$ be the polygon that is made of $c$ and extended by $q$ layers,
where
%Let $Q$ be the 'honeycomb' of diameter $2q+1$ around $c$, see
%\reffig{figs/honeycomb}, with
$$q:=\cases{\ell, & if $K_2$ is of type I\cr \ell-1, 
                  & if $K_2$ is of type II}\,.$$
$K_2$ denotes the component that is explored first, and
$\ell$ denotes the layer in which the split cell was found.
We choose $P_2 \subset P\!\HEXA\cup Q$ such that $K_2 \cup \{c\}$ 
is the $q$-offset of $P_2$, 
and $P_1 := ((P\!\HEXA\backslash P_2) \cup Q) \cap P\!\HEXA$,
see \reffig{figs/hexDecomposition1}.
The intersection with $P\!\HEXA$ is necessary, because $Q$ may exceed the boundary
of $P\!\HEXA$, see \reffig{figs/hexDecomposition2}. 
Note that at least $P_1$ contains the preceding start cell.
There is an arbitrary number of polygons $P_2$, such that
$K_2 \cup \{c\}$ is the $q$-offset of $P_2$, because 'dead ends' of $P_2$ 
that are not wider than $2q$ do not affect the $q$-offset.
To ensure a unique choice of $P_1$ and $P_2$, we require
that both $P_1$ and $P_2$ are connected, and both 
$P\!\HEXA \cup Q=P_1 \cup P_2$ 
and $P_1 \cap P_2 \subseteq Q$ are satisfied.

The choice of $P_1, P_2$ and $Q$ ensures that the agent's path
in $P_1\backslash Q$ and in $P_2\backslash Q$ do not change 
compared to the path in $P\!\HEXA$. The parts of the agent's path
that lead from $P_1$ to $P_2$ and from $P_2$ to $P_1$
are fully contained in $Q$.
Just the parts inside $Q$ are bended to connect the appropriate
paths inside $P_1$ and $P_2$;
see \reffig{figs/hexDecomposition1} and \reffig{figs/hexDecomposition2}.

\pstexfig{The component $K_2$ is of type I. $Q$ may exceed
$P\!\HEXA$.}{figs/hexDecomposition2}

In \reffig{figs/hexDecomposition1}, $K_1$ is of type III and $K_2$ is 
of type II. A component of type I occurs, if we detect a split cell
as shown in \reffig{figs/hexDecomposition2}.
Note that $Q$ may exceed $P\!\HEXA$, but $P_1$ and $P_2$
are still well-defined.

\iffalse

\pstexfig{A 'honeycomb', $Q$, of diameter $q$ around $c$.}
% is made of six rays (light gray) and the cells between these rays (darker).}
{figs/honeycomb}

In the following, we need to know the number of edges of $Q$:

\begin{lem}{hexQsize}
In a hexagonal polygon, the number of edges of $Q$ of diameter $2q+1$ ($q>0$)
is $$E(Q) = 12q+6\,.$$
\end{lem}
\begin{proof}
The 'honeycomb' $Q$ is made of six rays emanating from $c$ and the 
cells between these rays, see \reffig{figs/honeycomb}. Each ray contributes
three edges. A wedge between two rays grows by one cell per layer, and each
cell contributes two edges, thus we have $6\cdot 2 (q-1)$ edges for the cells
between the rays, and altogether $6\cdot(3+2(q-1))=12q+6$.
\end{proof}
\fi

\iffalse
%
% No shortest path guaranteed !!
% 

\pstexfig{The order of components is not necessarily optimal.}
{figCells/SplitCellNotOptimal}

Remark that we do not guarantee that the path from the last 
visited cell back to the corresponding start cell is the
shortest possible path. See, for example,
\reffig{figCells/SplitCellNotOptimal}:
A split cell is met in layer 2. Following the preceding rule, 
\smartDFS\ enters $K_2$ first, returns to $c$, explores $K_1$,
and returns to $s$. A path that visits $K_1$ first and moves
from the upper cell in $K_2$ to $s$ is
slightly shorter. A case like this may occur if the first cell of the current 
layer lies in $Q$.
However, we guarantee that there is only one return path 
in $P_1\backslash Q$ and in $P_2\backslash Q$; that is, only
one path leads from the last visited cell back to the preceding
start cell causing double visits of cells.
\fi

\bigskip

%
% Excess/Split-Lemma
%
We want to visit every cell in the polygon and to return to
$s$. Every  strategy needs at least $C(P\!\HEXA)$ steps to fulfill this task,
where $C(P\!\HEXA)$ denotes the number of cells in $P\!\HEXA$.
Thus, we can split the overall length of the exploration path, $\pfad$,
into two parts, $C(P\!\HEXA)$ and $\excess(P\!\HEXA)$, 
with $|\pfad|=C(P\!\HEXA)+\excess(P\!\HEXA)$.
In this context,
$C(P\!\HEXA)$ is a lower bound on the number of steps that are needed for
the exploration task, whereas
$\excess(P\!\HEXA)$ is the number of additional cell visits.

Because SmartDFS\HEXA\ recursively explores $K_2\cup\{c\}$, we want to
apply the upper bound inductively to the component $K_2\cup \{c\}$. 
If we explore $P_1$ with SmartDFS\HEXA\ until $c$ is met, 
the set of unvisited cells in $P_1$ is equal to $K_1$, because the 
path outside $Q$ do not change. Thus, we can apply
our bound inductively to $P_1$, too. 
The following lemma gives us the relation between
the path lengths in $P\!\HEXA$ and the path lengths in the two components.

\begin{lem}{hexcomponent}
Let $P\!\HEXA$ be a simple hexagonal grid polygon. 
Let the explorer visit the first 
split cell, $c$, which splits the unvisited cells of $P\!\HEXA$ 
into two components 
$K_1$ and $K_2$, where $K_2$ is of type I or II. 
With the preceding notations we have
$$\excess(P\!\HEXA)\leq \excess(P_1)+ \excess(K_2\cup\{c\})+1\; .$$
\end{lem}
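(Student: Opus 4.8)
The plan is to track how the exploration path of SmartDFS\HEXA\ on $P\!\HEXA$ decomposes, step by step, once the split cell $c$ is reached, and to compare it with the paths that SmartDFS\HEXA\ would produce when run separately on $P_1$ and on $K_2\cup\{c\}$. Recall from the setup that the agent's path inside $P_1\backslash Q$ and inside $P_2\backslash Q$ is unchanged compared with the path in $P\!\HEXA$; only the portions inside $Q$ are bent to reconnect things. So I would split the run on $P\!\HEXA$ into three phases: (a) the initial phase from $s$ up to the first visit of $c$, which coincides with the corresponding phase of the run on $P_1$ (the unvisited cells of $P_1$ at that moment are exactly $K_1$); (b) the recursive exploration of $K_2\cup\{c\}$, which by the choice of $P_2$ and $Q$ is, up to bending inside $Q$, the run of SmartDFS\HEXA\ on $K_2\cup\{c\}$ with $c$ playing the role of its start cell; (c) the return to $c$ and the subsequent exploration of $K_1$ and walk back to $s$, which matches the remainder of the run on $P_1$.

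Next I would convert path lengths into excess. Writing $|\pfad|=C+\excess$ throughout, the cell count is additive in the obvious way: $C(P\!\HEXA)=C(P_1)+C(K_2\cup\{c\})-1$, since the components $K_1$, $K_2$ partition the unvisited cells and $c$ is counted in both $P_1$ and $K_2\cup\{c\}$ (and the visited cells of $P\!\HEXA$ that are not in $Q$ lie in $P_1$). For the full path length, phases (a) and (c) together contribute exactly the length of the SmartDFS\HEXA\ path on $P_1$, phase (b) contributes the length of the SmartDFS\HEXA\ path on $K_2\cup\{c\}$, and the only discrepancy is the bending inside $Q$: the transition from $P_1$ into $K_2$ at $c$ and back from $K_2$ to $P_1$ at $c$. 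In the clean case these transitions cost nothing extra beyond what the separate runs already pay, since both separate runs pass through $c$ anyway; the ``$+1$'' absorbs the one place where the concatenation of the two sub-paths forces a single additional step — namely the reconnection at $c$, where SmartDFS\HEXA\ on $P\!\HEXA$ must, after finishing $K_2$, take one step that neither isolated run charges (the return into $c$ is shared, but closing the loop back toward $P_1$'s path costs one).

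So the skeleton is: (1) establish the three-phase decomposition of the path using the already-proven invariance outside $Q$; (2) identify phases (a)+(c) with the $P_1$-run and phase (b) with the $(K_2\cup\{c\})$-run, up to bending in $Q$; (3) add up cell counts, $C(P\!\HEXA)=C(P_1)+C(K_2\cup\{c\})-1$; (4) add up path lengths with a careful accounting of the reconnection at $c$, yielding at most one extra step; (5) subtract to get $\excess(P\!\HEXA)\le\excess(P_1)+\excess(K_2\cup\{c\})+1$.

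I expect the main obstacle to be step (4): pinning down exactly why the bending inside $Q$ never costs more than one extra step overall. One has to argue that the sub-path of the $P_1$-run through $Q$ and the sub-path of the $(K_2\cup\{c\})$-run through $Q$ can be spliced at $c$ so that the combined detour is at most one step longer than the sum of the two — using that both $P_1$ and $K_2\cup\{c\}$ have $c$ (resp.\ a cell of $Q$) on their boundary/first layer, that $Q$ is a convex honeycomb around $c$ so shortest paths inside it behave well, and that the ``dead ends'' of $P_2$ narrower than $2q$ were cut precisely so that the $q$-offset relation is exact. The corner case where the switch of layers happens inside the same cell as the split (the type-I, no-type-III situation, \reffig{figs/hexSplitcell2}(ii)) must be checked to see that the ``omit the first possible step'' rule does not inflate the bound. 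Once that bookkeeping is done, the inequality follows by straightforward addition.
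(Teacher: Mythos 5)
Your overall plan coincides with the paper's: decompose the tour into the phase before $c$, the recursive tour of $K_2\cup\{c\}$, and the remainder, charge the middle phase to $\excess(K_2\cup\{c\})$ and the rest to $\excess(P_1)$, and attribute the final $+1$ to the fact that $c$ is visited a second time when the agent returns from $K_2\cup\{c\}$. However, your bookkeeping has a concrete flaw. The identity $C(P\!\HEXA)=C(P_1)+C(K_2\cup\{c\})-1$ is false, because your parenthetical claim that ``the visited cells of $P\!\HEXA$ that are not in $Q$ lie in $P_1$'' does not hold: by construction $P_2$ contains the $q$ already-visited layers surrounding $K_2\cup\{c\}$, and since $P_1\cap P_2\subseteq Q$, every visited cell of $P_2$ outside $Q$ lies in \emph{neither} $P_1$ nor $K_2\cup\{c\}$. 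For the same reason phases (a)$+$(c) are not ``exactly the run on $P_1$'': phase (a) also traverses those cells. The ingredient that rescues the accounting --- and the one hypothesis your proposal never invokes --- is that $c$ is the \emph{first} split cell. This is precisely what the paper uses: it guarantees that every cell of $P_2\backslash(K_2\cup\{c\})$ is visited exactly once, so that region contributes zero excess; one then reasons directly about excess (number of re-visits) rather than about total path length and cell counts, and the three contributions $\excess(K_2\cup\{c\})$, $\excess(P_1)$, and the single extra visit of $c$ add up to the claimed bound.

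A second, smaller issue: even granting your two claims, your arithmetic yields $+2$, not $+1$. The $-1$ in your cell-count identity increases the excess bound by one, and the ``single additional step'' you allow at the reconnection increases it by another; you cannot have both and still land on $\excess(P_1)+\excess(K_2\cup\{c\})+1$. Dropping the cell-count route and arguing via the zero-excess property of $P_2\backslash(K_2\cup\{c\})$, as the paper does, removes both problems at once. Your concerns about the bending inside $Q$ and the no-type-III corner case are legitimate but are handled in the paper by the prior construction of $P_1$, $P_2$, $Q$ (``the paths outside $Q$ do not change''), not inside this lemma.
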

\begin{proof}
The strategy SmartDFS\HEXA\ has reached the split cell $c$ and 
explores $K_2\cup\{c\}$ with start cell $c$ first. Because $c$ is the
first split cell, there is no excess in $P_2\backslash (K_2\cup\{c\})$
and it suffices to consider $\excess(K_2\cup\{c\})$ for this
part of the polygon. 
After $K_2\cup\{c\}$ is finished, the agent returns to $c$ 
and explores $K_1$. For this part we take 
$\excess(P_1)$ into account. 
Finally, we add one single step, because the split cell $c$
is visited twice: once, when SmartDFS\HEXA\ detects the split and
once more after the exploration of $K_2\cup\{c\}$ is
finished. Altogether, the given bound is achieved. 
\end{proof}

\medskip

$c$ is the first split cell in $P\!\HEXA$, so $K_2\cup\{c\}$ is the $q$-offset of
$P_2$ and we can apply \reflem{hexoffset} to bound
the number of boundary edges of $K_2\cup\{c\}$ by the number
of boundary edges of $P_2$. The following lemma
allows us to charge the number of edges in $P_1$ and $P_2$
against the number of edges in $P\!\HEXA$ and $Q$.

\begin{lem}{hexedgecount}
Let $P\!\HEXA$ be a simple hexagonal
grid polygon, and let $P_1, P_2$ and $Q$ be defined as 
earlier. The number of edges satisfy the equation
$$E(P_1) + E(P_2) = E(P\!\HEXA) + E(Q)\;.$$
\end{lem}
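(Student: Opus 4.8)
The plan is to analyze how the edges of $P\!\HEXA$ and $Q$ are redistributed when the polygon is cut into $P_1$ and $P_2$ along the split cell $c$. Recall the defining properties: $P\!\HEXA \cup Q = P_1 \cup P_2$, $P_1 \cap P_2 \subseteq Q$, and both $P_1, P_2$ are connected; moreover $P_1 := ((P\!\HEXA \setminus P_2) \cup Q) \cap P\!\HEXA$. The key observation is that every edge appearing in the identity is a shared edge between a free cell and a blocked cell, and the decomposition is essentially local around $c$: far from $c$, the cells of $P_1$ coincide with cells of $P\!\HEXA$ (outside $P_2$) and the cells of $P_2 \setminus Q$ coincide with cells of $P\!\HEXA$ as well. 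So I would set up a bijective accounting of boundary edges on both sides of the claimed equation.

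First I would classify the boundary edges of $P\!\HEXA$. An edge $e$ on the perimeter of $P\!\HEXA$ borders a free cell $f$ of $P\!\HEXA$. If $f \notin Q$, then $f$ lies in exactly one of $P_1 \setminus Q$ or $P_2 \setminus Q$ (since $P_1 \cap P_2 \subseteq Q$), and the cell on the other side of $e$ is blocked in that same polygon, so $e$ contributes exactly once to the left-hand side; conversely it is not an edge of $Q$. If $f \in Q$, I need to check that $e$ is still a boundary edge of whichever of $P_1$, $P_2$ contains $f$ along $e$ — here I use that $Q$ is a "honeycomb" extension of $c$ and $P_1, P_2$ are built so that their paths outside $Q$ are unchanged; the only subtlety is cells of $Q$ that exceed $P\!\HEXA$ (see \reffig{figs/hexDecomposition2}), which are precisely the cells that will also be boundary cells of $Q$. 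Then I would classify the boundary edges of $Q$: an edge $e$ on the perimeter of $Q$ separates a free cell of $Q$ from a cell outside $Q$. Such an edge is "internal" to $P\!\HEXA \cup Q$ when the neighboring cell lies in $P\!\HEXA$, and in that case it becomes exactly one new boundary edge — it appears on the perimeter of exactly one of $P_1$ or $P_2$ (namely the one that does not contain the neighboring cell across $e$), because cutting $P\!\HEXA \cup Q$ into the two overlapping pieces exposes that edge on one side only. When the neighboring cell across $e$ is already blocked in $P\!\HEXA$ (the part of $\partial Q$ that coincides with $\partial P\!\HEXA$), $e$ is already counted among the boundary edges of $P\!\HEXA$, and it still appears once on the left-hand side, matching its single appearance on the right.

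Concretely, the cleanest route is a double-counting argument: for each unordered pair $(f, g)$ of adjacent cells with $f$ free in $P_1 \cup P_2$ and $g$ blocked in the relevant polygon, I show the common edge is counted the same number of times (zero or one) on each side of $E(P_1)+E(P_2) = E(P\!\HEXA)+E(Q)$. Partition all relevant edges into four groups: (a) edges with both incident cells in $(P\!\HEXA \cup Q) \setminus Q = P\!\HEXA \setminus Q$ — these contribute to whichever of $P_1, P_2$ contains the free cell and equally to $P\!\HEXA$, contributing $1=1$; (b) edges with one incident cell in $Q$ and one in $P\!\HEXA \setminus Q$ — these are interior to $P\!\HEXA \cup Q$, become a fresh perimeter edge of exactly one of $P_1, P_2$, and are perimeter edges of $Q$ (and, if the $P\!\HEXA\setminus Q$-cell is blocked, also of $P\!\HEXA$); (c) edges with both incident cells in $Q$ — interior to $Q$ hence interior to both $P_1$ and $P_2$ where they overlap, contributing $0=0$, unless one cell is blocked, in which case bookkeeping gives $1=1$; (d) edges on $\partial(P\!\HEXA\cup Q)$ — contribute once on the left and once on the right via whichever original polygon they came from. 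Summing over all four groups yields the identity.

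The main obstacle will be case (b) and the cells of $Q$ that stick out beyond $P\!\HEXA$: I must verify that the "overlap only in $Q$" and connectedness conventions force each such interior edge to be split to exactly one side, with no edge double-exposed and none lost. Making this precise amounts to observing that $P_1 \setminus Q$ and $P_2 \setminus Q$ are disjoint and together with $Q$ tile $P\!\HEXA \cup Q$, so the boundary between $P_1\setminus Q$ and $Q$ is assigned to $P_1$, the boundary between $P_2\setminus Q$ and $Q$ is assigned to $P_2$, and the portion of $\partial Q$ lying inside neither (i.e. on $\partial(P\!\HEXA\cup Q)$) is inherited unchanged — this is exactly the "cut and reglue" picture of \reffig{figs/hexDecomposition1} and \reffig{figs/hexDecomposition2}. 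Once the four-group partition is checked, the equation follows by summation with no further computation.
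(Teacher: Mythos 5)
Your proof is correct in substance but organized quite differently from the paper's. The paper's proof is a two-line algebraic argument: it invokes the general inclusion--exclusion identity $E(A)+E(B)=E(A\cup B)+E(A\cap B)$, asserted for arbitrary grid polygons (with a footnote-level convention that edges of thin parts of the intersection are counted twice), applies it once to $P_1,P_2$, substitutes $P_1\cap P_2=P\!\HEXA\cap Q$ and $P_1\cup P_2=P\!\HEXA\cup Q$, and then applies the same identity in reverse to $P\!\HEXA$ and $Q$. Your four-group edge classification is, in effect, the proof of that identity inlined and specialized to the geometry of $Q$; it buys self-containedness at the cost of the case analysis you describe. Note that the delicate situation you flag in groups (b)/(d) --- an edge whose two incident cells lie in $Q\setminus P\!\HEXA$ and $P\!\HEXA\setminus Q$ respectively, i.e.\ one in $P_1\setminus P_2$ and one in $P_2\setminus P_1$ --- is exactly the configuration in which the general identity would fail without the paper's counting convention, so both proofs stand or fall on the same geometric observation about $P_1$ and $P_2$ overlapping only inside $Q$; your resolution of it ("bookkeeping gives $1=1$") is no less rigorous than the paper's "Obviously". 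If you isolate $E(A)+E(B)=E(A\cup B)+E(A\cap B)$ as a standalone claim, your double counting proves it once and the remainder of your argument collapses to the paper's two lines.
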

\begin{proof*}
Obviously, two arbitrary polygons $P_1$ and $P_2$ always satisfy%
\footnote{Note that $P_1\cap P_2$ may have thin parts if common edges
of $P_1$ and $P_2$ do not occur in $P_1\cup P_2$. We count these
these edges twice in $P_1\cap P_2$.}
$$E(P_1) + E(P_2) = E(P_1\cup P_2) + E(P_1\cap P_2)\;.$$

With $P_1 \cap P_2 = P\!\HEXA\cap Q$ and 
$P_1\cup P_2=P\!\HEXA\cup Q$ we have
\begin{eqnarray*}
E(P_1) + E(P_2) & = & E(P_1\cap P_2) + E(P_1 \cup P_2)\\
& = & E(P\!\HEXA\cap Q) + E(P\!\HEXA\cup Q)\\
& = & E(P\!\HEXA) + E(Q)\qquad\qquad\proofendbox
\end{eqnarray*}
\end{proof*}

Finally, we need an upper bound for the length of a path inside a grid polygon.

\begin{lem}{hexshortest}
Let $\pfad$ be the shortest path between two cells 
in a hexagonal grid polygon $P\!\HEXA$.
The length of $\pfad$ is bounded by
$$|\pfad| \leq \frac14 E(P\!\HEXA) - \frac32\;.$$
\end{lem}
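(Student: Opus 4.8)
The plan is to bound the length of a shortest path between two cells $s$ and $t$ in $P\!\HEXA$ by building a grid polygon around the path whose perimeter is controlled, and then invoking the edge-counting machinery already developed — in particular the spirit of \reflem{hexoffset}. First I would observe that the shortest path $\pfad$ is a simple path (no repeated cells, since $P\!\HEXA$ is a grid polygon and any shortcut would contradict minimality), so $\len{\pfad}=n-1$ where $n$ is the number of cells on $\pfad$. The idea is then to ``thicken'' $\pfad$ into a grid subpolygon $R\subseteq P\!\HEXA$: either $R$ is $\pfad$ itself together with all cells of $P\!\HEXA$ enclosed by it, or, more cleanly, I would argue that it suffices to prove the bound for the extremal case where $P\!\HEXA$ is exactly a shortest ``corridor'', because adding cells to a hexagonal grid polygon never decreases $E$ below the relevant threshold (this is where I would want to lean on the perimeter estimates of \reflem{hexoffset} and the convexity discussion around \reffig{figs/hexConvReflex}).

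The cleanest route I would try: since $\pfad$ is a shortest path, consider the $1$-offset structure. Each cell of $P\!\HEXA$ that lies on the boundary contributes edges to $E(P\!\HEXA)$, and a shortest path cannot ``waste'' steps, so the path is forced to stay close to the boundary in a quantifiable way. Concretely, I would set up the following: walk along $\pfad$ and, at each cell, charge a fixed number of boundary edges of $P\!\HEXA$ to that cell, ensuring no boundary edge is charged more than once. In a hexagonal grid a cell has $6$ edges; a generic cell on a shortest path between $s$ and $t$ will have at least one incident boundary edge (otherwise the path passes through the interior, and one typically can shortcut around — or the polygon is ``thick'', in which case $E$ is large relative to the diameter anyway). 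Aiming for the constant $\frac14$: if every path cell is charged (on average) $4$ boundary edges and the two endpoint cells together account for the extra $-\frac32$ slack via the $+6$ boundary edges of a ``cap'' (recall a convex hexagonal polygon has $6$ extra right turns, contributing the constant term), one gets $E(P\!\HEXA)\geq 4\len{\pfad}+6$, i.e.\ $\len{\pfad}\leq\frac14 E(P\!\HEXA)-\frac32$.

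The step I expect to be the main obstacle is making the charging scheme rigorous in the ``thick'' case, where the shortest path genuinely runs through the interior and no single boundary edge is naturally associated to an interior path cell. There, the right move is a doubling/offset argument: if the path enters the $\ell$-offset, then by \reflem{hexoffset} the polygon has at least $12\ell$ more boundary edges than that offset, and these ``saved'' edges can be redistributed to pay for the interior portion of the path, since the diameter of the $\ell$-offset is itself bounded by induction. I would therefore structure the proof as an induction on the number of layers the path penetrates: the base case is the path staying in the first layer (handled by the direct charging argument above), and the inductive step peels off the first layer, applies \reflem{hexoffset} to get $E(P\!\HEXA)\geq E(1\text{-offset})+12$, and applies the inductive hypothesis to the $1$-offset, checking that $12$ edges suffice to cover the at most (a constant number of) extra steps needed to descend into and climb back out of the offset. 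Finally I would double-check the additive constant $-\frac32$ against a small example — e.g.\ a single hexagon ($E=6$, shortest path length $0$, and $\frac14\cdot 6-\frac32=0$, tight) — to confirm the bookkeeping.
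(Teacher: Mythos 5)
There is a genuine gap, and it sits exactly where the constant $\frac14$ has to come from. Your charging scheme is local: you want to assign, on average, four boundary edges of $P\!\HEXA$ to every cell of the shortest path, with no boundary edge charged twice. But a cell lying in the first layer of a thick polygon typically has only one or two incident boundary edges (in the hexagonal grid a straight stretch of boundary gives two edges per boundary cell), so any charging that only uses edges \emph{near} the path can yield at best $E(P\!\HEXA)\geq 2\len{\pfad}+\mathrm{const}$, i.e.\ $\len{\pfad}\leq\frac12 E(P\!\HEXA)-\mathrm{const}$ --- a factor of two away from the claim. You name $4$ as the target average but never supply a mechanism that produces it, and no local mechanism can: the path only ever sees one side of the polygon. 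The missing idea is global. The paper first reduces to the case where both endpoints lie in the first layer, then considers the \emph{closed} tour $\pfad_1$ along the entire first layer and charges two boundary edges per step of that tour (plus a turn bookkeeping with six excess right turns), obtaining $|\pfad_1|\leq\frac{E(P\!\HEXA)-6}{2}$. The shortest path from $s$ to $t$ is then bounded by the \emph{shorter} of the two boundary arcs between $s$ and $t$, hence by $\frac12|\pfad_1|$; this "you could have gone around the other way" step is what converts the factor $\frac12$ into the factor $\frac14$ and is absent from your sketch.

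Your fallback for the thick case --- an induction on the number of layers the path penetrates, paying for the descent with the $12\ell$ edges from \reflem{hexoffset} --- does not rescue this, and is also not needed: the paper never lets the path enter deeper layers at all, since the two boundary arcs already realize the bound. As sketched, the induction would face its own problems (the $1$-offset need not be connected, so the entry and exit cells of the interior portion may not be joined by a short path inside the offset, and a shortest path can weave between layers rather than descend once and ascend once). Your sanity check on the single hexagon is fine, and the observation that the bound is tight for width-$1$ corridors is correct, but the core of the proof --- why four edges per step rather than two --- is missing.
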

\begin{proof}
\Wlog\ we can assume that the start cell, $s$, and the target cell, $t$, of 
$\pfad$ belong to the first layer of $P$,
because we are searching for 
an upper bound for the shortest path between two arbitrary cells. 

Let $\pfad_1$ be the closed path in the 
1-offset of $P\!\HEXA$. For every forward step we have 2 edges
on the boundary of $P\!\HEXA$,
for every $60^\circ$ right turn 3 edges, 
for every $120^\circ$ right turn 4 edges, 
and for every left turn 1 edge.
For every left turn we can charge one right turn, so we have
in average 2 edges for every step. 
Further, we have 6 more right turns ($120^\circ$ turns count twice)
than left turns; that is,
$|\pfad_1| \leq {E(P\!\HEXA)-6 \over 2}$.
Now, observe the path $\pfad_L$ from $s$ to $t$ in the first layer that follows
the boundary of $P$ clockwise and the path $\pfad_R$ that follows
the boundary counterclockwise. 
In the worst case, both $\pfad_L$ and $\pfad_R$ have the same length, so
$|\pfad| = |\pfad_L| = |\pfad_R|$ holds. Thus, we have
$$2\cdot|\pfad| \leq |\pfad_1| \leq {E(P\!\HEXA)-6 \over 2} 
\;\Longrightarrow\;  
|\pfad| \leq \frac14 E(P\!\HEXA)-\frac32\,.$$
\end{proof}

\noindent
Now, we are able to show our main theorem:
\begin{theo}{hexmain}
Let $P\!\HEXA$ be a simple grid polygon with 
$C(P\!\HEXA)$ cells and $E(P\!\HEXA)$ edges. 
$P\!\HEXA$ can be explored with
$$S(P\!\HEXA)\leq C(P\!\HEXA)+\frac14 E(P\!\HEXA)-\frac52$$
steps. This bound is tight.
\end{theo}
\begin{proof}
% $C(P\!\HEXA)$ is the number of cells and thus a 
% lower bound on the number of steps that
% are needed to explore the polygon $P\!\HEXA$ and to return to $s$. 
We show by an induction on the number of components 
that $\excess(P\!\HEXA) \leq \frac14 E(P\!\HEXA)-\frac52$ holds.
For the induction base we consider a polygon without any split cell:
SmartDFS\HEXA\ visits each cell and returns on the shortest path 
to the start cell. Because there is no polygon split, all cells of $P\!\HEXA$ 
can be visited by a path of length $C(P\!\HEXA)-1$. 
By \reflem{hexshortest} the shortest path back to the start cell
is not longer than $\frac14 E(P\!\HEXA) - \frac32$; thus,
$\excess(P\!\HEXA)\leq \frac14 E(P\!\HEXA)-\frac52$ holds.

Now, we assume that there is more than one component during the 
application of SmartDFS\HEXA. 
Let $c$ be the first split cell detected in $P\!\HEXA$. 
When SmartDFS\HEXA\ reaches $c$, two new components, $K_1$ and $K_2$, occur.  
We consider the two polygons $P_1$ and $P_2$ defined as earlier,
using the polygon $Q$ around $c$;  
$K_2$ is recursively explored first.
As shown in \reflem{hexcomponent} we have
$$\excess(P\!\HEXA)\leq \excess(P_1)+ \excess(K_2\cup\{c\})+1\; .$$
Now, we apply the induction hypothesis to $P_1$ and $K_2\cup\{c\}$
and get
$$\excess(P\!\HEXA)
\leq \frac14 E(P_1)-\frac52+\frac14 E(K_2\cup\{c\})-\frac52 +1\,.$$
Applying \reflem{hexoffset} to the $q$-offset $K_2\cup\{c\}$ of $P_2$ 
yields $E(K_2\cup\{c\}) \leq E(P_2)-12q$. Thus,
we achieve 
$$\excess(P\!\HEXA) 
\leq \frac14 E(P_1)+\frac14 E(P_2) -3q -4\,.$$
With \reflem{hexedgecount} we have $E(P_1)+E(P_2)= E(P\!\HEXA) + E(Q)$;
and from \reflem{hexoffset} we conclude $E(Q) = 12q+6$ 
(a hexagon has 6 edges and $Q$ gains 12 edges per layer).
Altogether, we have 
\begin{eqnarray*}
\excess(P\!\HEXA) 
& \leq & \frac14 \Big(E(P_1)+ E(P_2)\Big) -3q -4\\
& \leq & \frac14 \Big(E(P\!\HEXA)+ 12q+6\Big) -3q -4\\
%\frac14 E(P\!\HEXA) + \frac32 -4
&=&\frac14 E(P\!\HEXA) - \frac52\,.
\end{eqnarray*}

It is easy to see that this bound is exactly achieved in corridors of width~1.
The exploration of such a corridor needs $2(C(P\!\HEXA)-1)$ steps.
On the other hand, the number of edges is $E(P\!\HEXA) = 4C(P\!\HEXA)+2$ 
(a corridor with one cell has 6 edges, and for every additional cell we
get 4 additional edges).
\end{proof}  

\sssect{Competitive Factor}{cfhex}~\\
So far we have shown an upper bound on the number of steps needed
to explore a polygon that depends on the number of cells and
edges in the polygon. Now, we want to analyze SmartDFS\HEXA\
in the competitive framework.

Corridors of width 1 or 2 play a crucial role in the following,
so we refer to them as {\em narrow passages}. More precisely, 
a cell, $c$, belongs to a narrow passage, if $c$ can be removed without
changing the layer number of any other cell. 

It is easy to see that narrow passages are explored optimally:
In corridors of width 1 both SmartDFS\HEXA\ and the optimal strategy
visit every cell twice, and in the other case both strategies visit
every cell exactly once. 

We need two lemmata to show a competitive factor for SmartDFS\HEXA.
The first one gives us a relation between the number of cells 
and the number of edges for a special class of polygons.

\begin{lem}{hexNoNarrow}
For a simple grid polygon, $P\!\HEXA$, with $C(P\!\HEXA)$ cells 
and $E(P\!\HEXA)$ edges, and without
any narrow passage or split cells in the first layer, we have
$$E(P\!\HEXA) \leq \frac43\,C(P\!\HEXA)+\frac{26}3\,.$$
\end{lem}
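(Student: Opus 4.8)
The plan is to relate $E(P\!\HEXA)$ to $C(P\!\HEXA)$ by exploiting the offset structure guaranteed by \reflem{hexoffset} together with the hypothesis that $P\!\HEXA$ has no narrow passages and no split cells in the first layer. The absence of narrow passages means that removing the first layer does not destroy the layer numbering of the remaining cells in a degenerate way; combined with the absence of split cells in layer~$1$, the $1$-offset $P'$ of $P\!\HEXA$ is a (nonempty) connected simple hexagonal grid polygon. I would first establish this structural fact carefully: every first-layer cell has at least one neighbor in the second layer (otherwise it would be a narrow-passage cell), and since there is no split cell in layer~$1$, the $1$-offset stays connected.

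Next I would count edges. By \reflem{hexoffset}, $E(P') \le E(P\!\HEXA) - 12$. For the cells, I want a lower bound on the number of cells in the first layer in terms of $E(P\!\HEXA)$: since each boundary edge is contributed by a first-layer cell and each hexagon has $6$ edges, the number of first-layer cells is at least $E(P\!\HEXA)/6$ — but I can do better by charging, because under the no-narrow-passage assumption each first-layer cell has at most, say, $3$ boundary edges on average in the relevant extremal configuration, so the number of first-layer cells is at least $E(P\!\HEXA)/3$ minus a small constant (the six convex corners account for the discrepancy via the "six more right turns than left turns" fact from \reflem{hexoffset}). Thus $C(P\!\HEXA) - C(P') = (\text{cells in layer }1) \ge \tfrac13 E(P\!\HEXA) - O(1)$. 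Rearranging gives $E(P\!\HEXA) \le 3\big(C(P\!\HEXA)-C(P')\big) + O(1)$, and I still need to absorb $C(P')$ and $E(P')$.

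The cleanest route is induction on the number of layers. For the base case, a polygon consisting of a single layer (no $1$-offset) with no narrow passage must be "fat" in the sense that every cell has several neighbors; the smallest such configuration is a single hexagon with its six neighbors, where $C=7$, $E=18$, and indeed $18 \le \tfrac43\cdot 7 + \tfrac{26}{3} = \tfrac{28+26}{3} = 18$, so the bound is tight here — which is reassuring and tells me the constant $\tfrac{26}{3}$ is exactly calibrated to this base case. For the inductive step, apply the hypothesis to $P'$ (which by the structural fact is again a simple hexagonal polygon; one must check it inherits "no narrow passages, no split cells in layer~1," or else argue directly) to get $E(P') \le \tfrac43 C(P') + \tfrac{26}{3}$, then combine with $E(P\!\HEXA) \le E(P') + (\text{edges gained}) $ and $C(P\!\HEXA) = C(P') + (\text{cells in layer }1)$, checking that each first-layer cell "pays for" at most $\tfrac43$ of its incident boundary edges on average.

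The main obstacle will be the edge-counting bookkeeping in the inductive step: I need the precise inequality saying that, in a hexagonal polygon with no narrow passage, passing from the $1$-offset $P'$ back to $P\!\HEXA$ increases $E$ by at most $\tfrac43$ times the number of added (first-layer) cells, with the slack exactly $\tfrac{26}{3} - \tfrac{26}{3} = 0$ so that the additive constant never degrades. This amounts to an amortized charging argument over the boundary turns of $\pfad$ (the closed midpoint path of the first layer), using that every $60^\circ$ left turn forces an extra $60^\circ$ right turn (\reffig{figs/hexConvReflex}) and that a first-layer cell adjacent to three boundary edges in a row would create a narrow passage, which is excluded. If a direct layer induction proves awkward because $P'$ may fail the no-split-cell-in-layer-1 hypothesis, the fallback is a global discharging argument: assign a charge to each cell and each edge and show the total edge charge never exceeds $\tfrac43$ of the total cell charge plus $\tfrac{26}{3}$, again using the six-excess-right-turns identity to account for the constant.
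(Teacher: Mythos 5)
Your overall plan---peel off the first layer, apply the bound inductively to the 1-offset $P'$, and charge the edge increase $E(P\!\HEXA)-E(P')$ against the first-layer cells---is genuinely different from the paper's argument, and it has a gap that you half-see but do not close: the 1-offset does \emph{not} in general inherit the hypothesis ``no narrow passages.'' The tight example for this very bound (and for the competitive factor, see \reffig{figs/hexComp}) is a corridor of width 3: it satisfies the hypotheses of the lemma, yet its 1-offset is a corridor of width 1, which consists entirely of narrow-passage cells and satisfies $E(P')=4C(P')+2$, violating $E(P')\le\frac43 C(P')+\frac{26}{3}$ already for $C(P')\ge 3$. Since the width-3 corridor attains the stated bound with equality, there is no slack anywhere to absorb this failure, so the case you defer with ``or else argue directly'' is in fact the whole difficulty. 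Your per-layer inequality $E(P\!\HEXA)-E(P')\le\frac43\bigl(C(P\!\HEXA)-C(P')\bigr)$ also fails at the innermost step (honeycomb minus its first layer: $\Delta E=12$ but $\frac43\Delta C=8$), which you avoid only by declaring the honeycomb the base case; the fallback ``global discharging'' argument is not worked out at all.

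The paper avoids layers entirely: it repeatedly removes from the boundary a connected strip of at least three cells forming a straight line or a $60^\circ$ bend, chosen so that the remaining polygon still has no narrow passage and no first-layer split cell (the hypotheses guarantee such a strip exists). Each removal deletes at least $3$ cells and at most $4$ edges, and since $4=\frac43\cdot 3$ the quantity $E-\frac43 C$ never increases when the strips are added back; the process terminates at the honeycomb with $C=7$, $E=18$, where equality holds---you correctly identified this extremal configuration and the calibration of the constant $\frac{26}{3}$. That strip decomposition handles the width-3 corridor effortlessly (one shortens it row by row), which is precisely where your layer induction breaks; salvaging your approach would require a separate treatment of the regions whose 1-offset degenerates to width $\le 1$, which essentially reinvents the paper's decomposition.
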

\begin{proof}
Consider a simple polygon, $P\!\HEXA$. We successively remove
at least three connected boundary cells that either form a straight
line or two lines with a $60^\circ$ angle. We remove a set of cells only
if the resulting polygon still fulfills our assumption that
the polygon has no narrow passages or split cells in the first layer.
These assumptions ensure that we can always find such a row or column
(i.e., if we cannot find such a row or column, the polygon has a narrow
passage or a split cell in the first layer).
Thus, we remove at {\em least} three cells and at most {\em four} edges. This decomposition
ends with a 'honeycomb', a center cell with its 6 neighbors, 
with $7$ cells and 18 edges
that fulfills $E(P\!\HEXA) = \frac43\,C(P\!\HEXA)+\frac{26}3$;
see \reffig{figs/hexCompLayer}(i)
Now, we reverse our decomposition; that is, we successively add all rows and
columns until we end up with $P$. In every step, we add at least three
cells and at most four edges. Thus, 
$E(P\!\HEXA) \leq \frac43\,C(P\!\HEXA)+\frac{26}3$
is fulfilled in every step.
\end{proof}

\pstexfig{(i) The minimal polygon that has neither narrow passages nor split
cells in the first layer, 
(ii) for polygons without narrow passages or split cells in the first 
layer, the last explored cell, $c'$, lies in the 1-offset, $P'$ (shaded).}
{figs/hexCompLayer}

For the same class of polygons, we can show that SmartDFS\HEXA\ behaves
slightly better than the bound in \reftheo{hexmain}.

\begin{lem}{hexStepNoNarrow}
A simple hexagonal grid polygon, $P\!\HEXA$, with $C(P\!\HEXA)$ cells 
and $E(P\!\HEXA)$ edges, and without
any narrow passage or split cells in the first layer can be 
explored using no more steps than
$$S(P\!\HEXA) \leq C(P\!\HEXA) + \frac14 E(P\!\HEXA) - \frac92\,.$$
\end{lem}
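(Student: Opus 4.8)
The plan is to use the two hypotheses to reduce the analysis to the $1$-offset $P'$ of $P\!\HEXA$. Because $P\!\HEXA$ has no split cells in the first layer, SmartDFS\HEXA\ runs once around the whole first layer along the left-hand rule \emph{without any repositioning step} and only afterwards descends into $P'$; and because $P\!\HEXA$ has no narrow passages, $P'$ is non-empty and connected (a disconnected $1$-offset would force a split cell in the first layer), so the part of the run taking place in $P'$ is itself a legal run of SmartDFS\HEXA\ on the simple polygon $P'$. In particular, the last cell $c'$ that is newly visited lies in $P'$ (see \reffig{figs/hexCompLayer}(ii)), and the closing shortest-path walk of the strategy leads from $c'\in P'$ to the start cell $s$, which belongs to the first layer of $P\!\HEXA$.

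Next I would do the bookkeeping. Let $c_m$ be the last first-layer cell reached by the tour around the boundary (so $c_m$ is adjacent to $s$), and let $e'$ be the cell of $P'$ into which SmartDFS\HEXA\ descends from $c_m$. Writing $C_1$ for the number of first-layer cells, so that $C_1+C(P')=C(P\!\HEXA)$, the path decomposes into: the $C_1-1$ steps of the boundary traversal; one step to descend into $e'$; the remaining steps inside $P'$, which end in $c'$ and whose number is at most $C(P')+\excess(P')-d_{P'}(c',e')$, namely the cost of a run of SmartDFS\HEXA\ on $P'$ started in $e'$ with its final return to $e'$ removed (repositioning in the $P\!\HEXA$-run can only be cheaper, since it may reuse first-layer cells); and the final return walk of length $d_{P\!\HEXA}(c',s)$. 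Summing and using $C_1+C(P')=C(P\!\HEXA)$ yields
$$\excess(P\!\HEXA)\ \le\ \excess(P')-d_{P'}(c',e')+d_{P\!\HEXA}(c',s)\,.$$
Now \reftheo{hexmain} applied to $P'$ gives $\excess(P')\le\frac14 E(P')-\frac52$ (note $P'$ itself need not be narrow-passage-free, so only this weaker bound is available for it), and \reflem{hexoffset} gives $E(P')\le E(P\!\HEXA)-12$. Hence, if one also establishes
$$d_{P\!\HEXA}(c',s)\ \le\ d_{P'}(c',e')+1\,,$$
then $\excess(P\!\HEXA)\le\excess(P')+1\le\frac14 E(P')-\frac32\le\frac14\big(E(P\!\HEXA)-12\big)-\frac32=\frac14 E(P\!\HEXA)-\frac92$, which is exactly the asserted bound on $S(P\!\HEXA)$; the whole improvement over \reftheo{hexmain} comes from the $12$-edge gain of \reflem{hexoffset} together with the near-cancellation of the two return walks.

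The step I expect to be the main obstacle is the local inequality $d_{P\!\HEXA}(c',s)\le d_{P'}(c',e')+1$. The intended reason is that $s$ and $e'$ are both neighbours of $c_m$ and occupy consecutive positions in the cyclic order of the six neighbours of $c_m$ (the left-hand rule leaves $c_m$ towards $s$, and the first still-unexplored neighbour clockwise after $s$ is exactly $e'$), so $s$ and $e'$ are adjacent in the hexagonal grid; hence every path in $P'$ from $c'$ to $e'$ extends by one step to a path in $P\!\HEXA$ from $c'$ to $s$. Making this rigorous needs a short case analysis of the neighbourhood of $c_m$ — in particular one must use the absence of narrow passages to rule out the degenerate configuration in which $c_m$ has no inward neighbour at all (which would also insert a repositioning step into the descent). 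Finally, the smallest polygons meeting the hypotheses — down to the seven-cell honeycomb of \reffig{figs/hexCompLayer}(i), where $\excess=0=\frac14 E-\frac92$ — serve as the base of the argument and are verified by direct inspection.
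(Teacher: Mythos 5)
Your proposal is correct and rests on exactly the same three ingredients as the paper's own proof: the last explored cell $c'$ lies in the 1-offset $P'$ because the first layer is finished first, the first cell of $P'$ that is entered (your $e'$, the paper's $s'$) is a neighbour of $s$, and \reflem{hexoffset} gives $E(P')\leq E(P\!\HEXA)-12$, so the net gain over \reftheo{hexmain} is $\frac{12}{4}-1=2$ steps. The only difference is organizational: the paper keeps the global bound of \reftheo{hexmain} and merely tightens the final return-path estimate of \reflem{hexshortest} from $\frac14 E(P\!\HEXA)-\frac32$ to $\frac14 E(P\!\HEXA)-\frac72$, whereas you apply \reftheo{hexmain} as a black box to $P'$ and account for the first-layer traversal separately --- which is, if anything, the cleaner way to justify the bound when $P'$ itself still contains split cells.
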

\begin{proof}
In \reftheo{hexmain} we have seen that 
$S(P\!\HEXA) \leq C(P\!\HEXA) +\frac14E(P\!\HEXA) - \frac52$
holds. To show this theorem, we used 
\reflem{hexshortest} on \refpagelem{hexshortest}
as an upper bound for
the shortest path back from the last explored cell to the start cell.
\reflem{hexshortest} bounds the shortest path from a cell, $c$, in the
first layer of $P\!\HEXA$ to the cell $c'$ that maximizes the distance to 
$c$ inside $P\!\HEXA$; thus, $c'$ is located in the first layer of 
$P\!\HEXA$, too.

Because $P\!\HEXA$ 
has neither narrow passages nor split cells in the first layer,
we can explore the first layer of $P\!\HEXA$ completely before we visit another
layer, see \reffig{figs/hexCompLayer}(ii). 
Therefore, the last explored cell, $c'$, of $P\!\HEXA$ is located in the 
1-offset of $P\!\HEXA$. Let $P'$ denote the 1-offset of $P\!\HEXA$, 
and $s'$ the first
visited cell in $P'$. Remark that $s$ and $s'$ are neighbors,
so the shortest path from $s'$ to $s$ is only one step.
Now, the shortest path, $\pfad$, from $c'$ to $s$ in $P\!\HEXA$ is bounded by 
a shortest path, $\pfad'$, from $c'$ to $s'$ in $P'$ and a
shortest path from $s'$ to $s$:
$$|\pfad| \leq |\pfad'| + 1\,.$$
The path $\pfad'$, in turn, is bounded using \reflem{hexshortest} by
$$|\pfad'| \leq \frac14 E(P')- \frac32\,.$$
By \reflem{hexoffset} (\refpagelem{hexoffset}), $E(P') \leq E(P\!\HEXA) -12$ 
holds, and altogether we get
$$|\pfad| \leq \frac14 E(P\!\HEXA) - \frac72\,,$$
which is two steps shorter than stated in \reflem{hexshortest}.
\end{proof}

\medskip
%\pagebreak %@@@

\noindent
Now, we can prove the following
\begin{theo}{hexcomp}
The strategy {SmartDFS\HEXA} is $\frac43$-competitive.
\end{theo}
\begin{proof}
Let $P\!\HEXA$ be a simple grid polygon. In the first stage, we remove
all narrow passages from $P\!\HEXA$ and get a sequence of (sub-)polygons
$P_i$, $i=1,\ldots,k$, without narrow passages.
%\footnote{The start cell of every subpolygon constructed in this proof
%is the first cell that SmartDFS\HEXA\ visits during the exploration of $P$ 
%in the corresponding part.}
For every $P_i$, $i=1,\ldots,k-1$, the optimal strategy 
in $P\!\HEXA$ explores the part of $P\!\HEXA$ 
that corresponds to $P_i$ up to the
narrow passage that connects $P_i$ with $P_{i+1}$, enters $P_{i+1}$, and
fully explores every $P_j$ with $j\geq i$. Then it returns to $P_i$ and
continues with the exploration of $P_i$. Further, we already know that
narrow passages are explored optimally. This allows us to 
consider every $P_i$ separately without changing the competitive factor 
of $P\!\HEXA$.

Now, we observe a (sub-)polygon $P_i$. We show by induction on the
number of split cells in the first layer that 
$S(P_i) \leq \frac43C(P_i) -\frac73 $ holds. 
Note that this bound is exactly achieved in a polygon as shown
in \reffig{figs/hexComp}: For the middle part---a corridor of width 3---,
SmartDFS\HEXA\ needs four steps for three cells. Additionally, we have
seven cells (the first row and the last two rows) that are explored
optimally. Thus, we have 
$\frac43\Big(C(P_i)-7\Big)+7=\frac43C(P_i)-\frac73$ steps.

\pstexfig{The competitive factor of $\frac43$ is exactly achieved:
In a corridor of width 3,
$S(P\!\HEXA)=\frac43\, \rmsub{S}{Opt}(P\!\HEXA)-\frac73$ holds.}
{figs/hexComp}

If $P_i$ has no split cell in the first layer (induction base), 
we can apply 
\reflem{hexStepNoNarrow} and \reflem{hexNoNarrow}:
\begin{eqnarray*}
S(P_i) &\leq & C(P_i)+\frac14\,E(P_i)-\frac92 \\
& \leq & C(P_i)+\frac14\left(\frac43\, C(P_i)+\frac{26}3\right)-\frac92\\
& = & \frac43\, C(P_i) - \frac73\,.
\end{eqnarray*}

\pstexfig{Two cases of split cells, (i) component of type II, (ii) 
component of type I.}
{figs/hexSplitComp}

Two cases occur if we meet a split cell, $c$, in the first layer, see
\reffig{figs/hexSplitComp}. In the first case, the new component
was never visited before (component of type II, see \refpage{comptypes}).
Here, we define $Q:=\{c\}$.
The second case occurs, because the explorer meets a cell, $c'$, 
that is in the first layer and touches the current cell, $c$, 
see for example \reffig{figs/hexSplitComp}(ii).
In this case, we define $Q$ as $\{c,c'\}$.

Similar to the proof of \reftheo{hexmain}, we split the polygon
$P_i$ into two parts, both including $Q$. Let $P''$ denote 
the part that includes the component of type I or II, $P'$ the
other part. For $|Q|=1$, see \reffig{figs/hexSplitComp}(i),
we conclude $S(P_i) = S(P')+S(P'')$ and
$C(P_i) = C(P')+C(P'')-1$. Applying the induction hypothesis to
$P'$ and $P''$ yields
\begin{eqnarray*}
S(P_i) & = & S(P')+S(P'')\\
&\leq & \frac43\,C(P')-\frac73 + \frac43\,C(P'')-\frac73\\
&=&  \frac43\,C(P_i)+\frac43-\frac{14}3 \quad 
< \quad \frac43\,C(P_i)-\frac73,.
\end{eqnarray*}

For $|Q|=2$ we gain some steps by merging the polygons.
If we consider $P'$ and $P''$ separately, we count the steps from $c'$ to
$c$---or vice versa---in both polygons, but in $P_i$ the path from $c'$
to $c$ is replaced by the exploration path in $P''$. Thus, we
have $S(P_i) = S(P')+S(P'')-2$ and 
$C(P_i) = C(P')+C(P'')-2$. This yields
\begin{eqnarray*}
S(P_i) & = & S(P')+S(P'')-2\\
&\leq & \frac43\,C(P')-\frac73 + \frac43\,C(P'')-\frac73-2\\
&=&  \frac43\,C(P_i)+\frac83-\frac73-2 \quad 
< \quad \frac43\,C(P_i)-\frac73\,.
\end{eqnarray*}

The optimal strategy needs at least $C$ steps, which, altogether, yields a 
competitive factor of $\frac43$. 
This factor factor is achieved in a 
corridor of width 3, see \reffig{figs/hexComp}.
\end{proof}

\clearpage
\ssect{The Analysis of SmartDFS\TRIA}{trianalysis}
In this section, we analyze the performance of our strategy on
a triangular grid. 
The proof follows the same outline as the proof in the preceding 
section. The basic idea is an induction on the split cells. Thus,
we point out only the differences between SmartDFS\HEXA\ and
SmartDFS\TRIA.
The first difference concerns the $\ell$-offset:

\begin{lem}{trioffset}
The $\ell$-offset of a simple, triangular grid 
polygon, $P\!\TRIA$, has at least $6\ell$ edges fewer than $P\!\TRIA$.
\end{lem}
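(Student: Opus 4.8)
The plan is to mimic the proof of \reflem{hexoffset} almost verbatim, adjusting the combinatorial constants to the triangular grid. First I would reduce to the ``core'' polygon: cut away all parts of $P\!\TRIA$ that do not influence the $\ell$-offset (in particular, dead-end passages narrower than $2\ell$), so that the remaining polygon has the same $\ell$-offset as $P\!\TRIA$ but has at most as many boundary edges. Then I would consider the closed boundary walk $\pfad$ connecting the midpoints of the first-layer cells in clockwise order, and track how the perimeter of the offset changes as $\pfad$ goes straight, turns left, or turns right. In a triangular grid the exterior/interior angles come in multiples of $60^\circ$, and the relevant accounting is: a left turn makes the offset gain at most some constant number $a$ of edges per layer, and a right turn makes it lose at least that same constant $a$ per layer (the cut-away step guarantees there is enough ``material'' to lose); a straight step changes the offset perimeter by nothing net. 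The key numeric input is then the global turning number.

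The second ingredient is the discrete Gauss--Bonnet / turning-number fact: for a simple triangular grid polygon, walking the first layer clockwise, the number of right turns exceeds the number of left turns by a fixed amount (counting a $120^\circ$ turn as two $60^\circ$ turns). For the hexagonal case that surplus was $6$ and the per-turn loss was $2\ell$, giving $12\ell$. For triangles I expect the surplus of right over left $60^\circ$ turns to again be $6$ (a convex triangular polygon --- e.g.\ a ``big triangle'' or a hexagonal-shaped patch --- has its boundary turning total $360^\circ$, i.e.\ six $60^\circ$ right turns in excess, and each reflex feature adds one left and one right turn in tandem exactly as in \reffig{figs/hexConvReflex}), while the per-turn gain/loss constant is $\ell$ rather than $2\ell$, because each unit of boundary turning in a triangular grid exposes/hides only $\ell$ offset edges per layer instead of $2\ell$. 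Multiplying, $6 \cdot \ell = 6\ell$, which is exactly the claimed bound. So the proof is: (i) prune to the core; (ii) establish the turn-surplus of six via the convex-case computation plus the left/right-turn pairing argument; (iii) do the per-turn edge bookkeeping to conclude the offset loses at least $6\ell$ edges.

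The main obstacle I anticipate is getting the per-turn constant right and making the ``loses at least $\ell$ per right turn'' half airtight. On the gain side (left turns) an over-count is harmless since we only need an upper bound on what is gained, but on the loss side one must be sure the offset really has $\ell$ edges to give up at each right turn even when several turns are close together or when a passage is exactly $2\ell$ wide --- this is precisely what the initial pruning of sub-$2\ell$ dead ends is for, and I would state carefully that after pruning every right turn is ``charged'' against a disjoint strip of boundary of the appropriate width. The rest (the turning-number bookkeeping) is routine and parallels the hexagonal argument, so I would present it compactly, probably just by reference to \reflem{hexoffset} with the constants changed, and spend the detailed writing on the edge-counting per turn and on why the pruning step is valid.
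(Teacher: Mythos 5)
Your proposal matches the paper's proof essentially verbatim: prune dead ends narrower than $2\ell$, walk the boundary clockwise, charge a gain of at most $\ell$ offset edges per $60^\circ$ left turn against a loss of at least $\ell$ per right turn, and use the surplus of six right turns (counting $120^\circ$ turns twice) to conclude a net loss of $6\ell$ edges per layer. Your identification of the per-turn constant as $\ell$ rather than the hexagonal $2\ell$, and your remarks on why the pruning step guarantees enough edges to lose, are exactly the paper's argument.
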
 

\pstexfig{The 2-offset (shaded) of a grid polygon $P\!\TRIA$.}
{figs/tridoffset}

\begin{proof}
As in \reflem{hexoffset},
we can cut off blind alleys that are narrower than $2\ell$, because
those parts of $P\!\TRIA$ do not affect the $\ell$-offset.
We walk clockwise around the boundary of the remaining polygon, see
\reffig{figs/tridoffset}. For every $60^\circ$ left turn the offset gains 
at most $\ell$ edges and for every $60^\circ$ right turn the offset 
looses at least $\ell$ edges. 
Similar to the proof of \reflem{hexoffset}, there are
six more $60^\circ$ 
right turns than left turns (again, we count $120^\circ$ turns as 
two $60^\circ$ turns).
Altogether, we loose at least six edges for every layer.
\end{proof}

\pstexfig{A decomposition of $P\!\TRIA$: 
$K_1$ is of type III, $K_2$ of type II.}
{figs/triDecomposition1}

\pstexfig{A decomposition of $P\!\TRIA$: 
$K_1$ is of type III, $K_2$ of type I.
($Q$ may exceed $P\!\TRIA$.)}
{figs/triDecomposition2}

In line with SmartDFS\HEXA, we subdivide a polygon, $P\!\TRIA$, 
into three parts when we meet a split cell, $c$, in layer $\ell$:
$$P\!\TRIA = K_1 \disjoint K_2 \disjoint\ \{\,\mbox{visited cells of } P\!\TRIA\,\},$$
where $K_1$ and $K_2$ denote the connected components of the set of unvisited
cells, and $K_2$ is explored first.
Again, we have three possible types of components,
see \reffig{figs/triDecomposition1}
and \reffig{figs/triDecomposition2}:

\begin{romanlist}
\item[I.] $K_i$ is {\em completely} surrounded by layer $\ell$
\labelpage{tricomptypes}
\item[II.] $K_i$ is {\em not} surrounded by layer $\ell$
\item[III.] $K_i$ is {\em partially} surrounded by layer $\ell$
\end{romanlist}

\sssect{An Upper Bound on the Number of Steps}{ubtria}
As in the \refsect{ubtria}, 
$Q$ is the split cell broadened by $q$ layers, with
$$q:=\cases{\ell, & if $K_2$ is of type I\cr \ell-1, 
                  & if $K_2$ is of type II}\,.$$

\clearpage %@@@

Further,
we choose $P_2 \subset P\!\TRIA\cup Q$ such that $K_2 \cup \{c\}$ 
is the $q$-offset of $P_2$, 
and $P_1 := ((P\!\TRIA\backslash P_2) \cup Q) \cap P\!\TRIA$.

\bigskip

%
% Excess/Split-Lemma
%
We split the overall length of the exploration path, $\pfad$,
into two parts, $C(P\!\TRIA)$ and $\excess(P\!\TRIA)$, 
with $|\pfad|=C(P\!\TRIA)+\excess(P\!\TRIA)$.
\reflem{hexcomponent} and \reflem{hexedgecount}
hold also for triangular polygons:

\begin{lem}{tricomponent}
Let $P\!\TRIA$ be a simple triangular grid polygon. 
Let the agent visit the first 
split cell, $c$, which splits the unvisited cells of $P\!\TRIA$ 
into two components 
$K_1$ and $K_2$, where $K_2$ is of type I or II. 
With the preceding notations we have
$$\excess(P\!\TRIA)\leq \excess(P_1)+ \excess(K_2\cup\{c\})+1\; .$$
\end{lem}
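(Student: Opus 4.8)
The plan is to mirror the proof of \reflem{hexcomponent} almost verbatim, since the combinatorial argument there never used any property specific to hexagonal cells: it only used that $c$ is the \emph{first} split cell, that the strategy explores $K_2\cup\{c\}$ first starting from $c$, and that the split cell is revisited exactly once. First I would recall the setup: SmartDFS\TRIA\ reaches the split cell $c$ in layer $\ell$, the unvisited cells split into $K_1$ and $K_2$ with $K_2$ of type I or II, and by the definition of $Q$, $P_1$ and $P_2$ given just above, the agent's path in $P_1\setminus Q$ and in $P_2\setminus Q$ is unchanged compared to its path in $P\!\TRIA$, with only the portions inside $Q$ bent to reconnect. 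This is exactly the structural fact already established for the hexagonal case and carried over here, so I would invoke it directly.

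Next I would account for the excess contributions one region at a time. Since $c$ is the first split cell, there is no double visit of any cell in $P_2\setminus(K_2\cup\{c\})$ before $c$ is reached, so the only excess attributable to the ``$K_2$ side'' is $\excess(K_2\cup\{c\})$, where $K_2\cup\{c\}$ is explored as a subpolygon with start cell $c$. After finishing $K_2\cup\{c\}$, the agent returns to $c$ and proceeds into $K_1$; the remaining exploration together with the return to the preceding start cell is precisely the behaviour of SmartDFS\TRIA\ on $P_1$ with unvisited set $K_1$ (again because the path outside $Q$ is unchanged), contributing $\excess(P_1)$. Finally, the cell $c$ itself is visited twice --- once when the split is detected and once upon returning from $K_2\cup\{c\}$ --- so we add a single extra step. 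Summing gives
$$\excess(P\!\TRIA)\leq \excess(P_1)+\excess(K_2\cup\{c\})+1,$$
as claimed.

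I do not expect any real obstacle: the argument is purely a bookkeeping of double visits and is independent of the cell shape, so the only thing to check is that the definitions of $Q$, $P_1$, $P_2$ stated for the triangular case genuinely make the ``path unchanged outside $Q$'' property hold --- which is asserted in the text by analogy with the hexagonal construction. If pressed, the one point worth a sentence of justification is that at a split cell apart from the start cell only two components arise (never three), exactly as noted for hexagons; this is what lets us write $P\!\TRIA = K_1\disjoint K_2\disjoint\{\text{visited}\}$ and treat the two sides symmetrically. Given that, the proof is a direct transcription of the proof of \reflem{hexcomponent}.
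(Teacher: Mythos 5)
Your proposal is correct and matches the paper's intent exactly: the paper gives no separate proof for the triangular case, stating only that \reflem{hexcomponent} carries over, and your argument is precisely the transcription of that hexagonal proof (no excess in $P_2\setminus(K_2\cup\{c\})$ since $c$ is the first split cell, $\excess(K_2\cup\{c\})$ for the recursively explored component, $\excess(P_1)$ for the remainder, plus one step for the double visit of $c$). Nothing further is needed.
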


\begin{lem}{triedgecount}
Let $P\!\TRIA$ be a simple triangular
grid polygon, and let $P_1, P_2$ and $Q$ be defined as 
earlier. The number of edges satisfy the equation
$$E(P_1) + E(P_2) = E(P\!\TRIA) + E(Q)\;.$$
\end{lem}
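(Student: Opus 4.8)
The plan is to mimic the proof of \reflem{hexedgecount} verbatim, since the combinatorial identity it relies on is purely set-theoretic and makes no reference to the cell type. First I would invoke the general fact that for any two grid polygons $P_1$ and $P_2$,
$$E(P_1) + E(P_2) = E(P_1\cup P_2) + E(P_1\cap P_2)\,,$$
which holds because each boundary edge of $P_1$ and of $P_2$ is either a boundary edge of the union (if the neighbouring cell outside is blocked in the combined region) or a boundary edge of the intersection (if that neighbouring cell belongs to exactly one of $P_1,P_2$); edges interior to $P_1\cup P_2$ are counted the same number of times on both sides, and the thin parts of $P_1\cap P_2$, where a shared edge does not survive in the union, are counted twice on the right, matching the left. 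This counting argument only uses that a triangular cell has a fixed finite set of edges and that adjacency is by shared edges, so it transfers directly from the hexagonal case.

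Next I would verify the two identities $P_1\cup P_2 = P\!\TRIA\cup Q$ and $P_1\cap P_2 = P\!\TRIA\cap Q$, exactly as in the hexagonal proof. These follow from the definitions chosen in \refsect{ubtria}: $P_2\subset P\!\TRIA\cup Q$ is picked so that $K_2\cup\{c\}$ is its $q$-offset, and $P_1 := ((P\!\TRIA\setminus P_2)\cup Q)\cap P\!\TRIA$, together with the (implicit, as in the hexagonal case) normalisation that $P_1,P_2$ are connected, $P_1\cup P_2 = P\!\TRIA\cup Q$, and $P_1\cap P_2\subseteq Q$. Substituting these into the general identity gives
$$E(P_1)+E(P_2) = E(P\!\TRIA\cap Q) + E(P\!\TRIA\cup Q) = E(P\!\TRIA) + E(Q)\,,$$
where the last equality is again the same set-identity applied to the pair $(P\!\TRIA, Q)$.

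I expect no real obstacle here: the statement is the triangular analogue of \reflem{hexedgecount}, and the proof of that lemma never used any hexagon-specific property such as the number of edges per cell or the turning-angle bookkeeping — those specifics only enter in \reflem{trioffset} and in the path-length lemma, not in the edge-count identity. The only point requiring a sentence of care is the footnote-level subtlety about thin parts of $P_1\cap P_2$: when a common edge of $P_1$ and $P_2$ does not appear in $P_1\cup P_2$, it must be counted twice in $E(P_1\cap P_2)$ for the identity to balance, and one should note that $Q$ (a triangular cell broadened by $q$ layers) is shaped so that this convention is consistent. Accordingly the proof can simply read: ``The proof of \reflem{hexedgecount} carries over verbatim, since it relies only on the set-theoretic identity $E(A)+E(B)=E(A\cup B)+E(A\cap B)$ and on the relations $P_1\cup P_2=P\!\TRIA\cup Q$, $P_1\cap P_2=P\!\TRIA\cap Q$, none of which depends on the cell type.''
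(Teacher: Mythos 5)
Your proposal is correct and follows essentially the same route as the paper, which indeed proves \reflem{hexedgecount} via the identity $E(P_1)+E(P_2)=E(P_1\cup P_2)+E(P_1\cap P_2)$ together with $P_1\cap P_2 = P\cap Q$ and $P_1\cup P_2 = P\cup Q$, and then simply asserts that this carries over verbatim to the triangular case. Your added remark about the double-counting convention for thin parts of $P_1\cap P_2$ matches the footnote in the hexagonal proof.
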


In contrast to hexagonal and square polygons, we need all but three edges
for an upper bound on the length of a shortest path:

\begin{lem}{trishortest}
Let $\pfad$ be the shortest path between two cells 
in a triangular grid polygon $P\!\TRIA$.
The length of $\pfad$ is bounded by
$$|\pfad| \leq  E(P\!\TRIA) -3\;.$$
\end{lem}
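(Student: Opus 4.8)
The plan is to mimic the structure of the proof of \reflem{hexshortest}, but to account for the fact that a triangular grid has two orientations of cells and that turning behaviour of a boundary walk is coarser. First I would reduce, without loss of generality, to the case where both endpoints $s$ and $t$ of the shortest path $\pfad$ lie in the first layer of $P\!\TRIA$: since we seek an upper bound valid for the worst pair of cells, and the diameter of the polygon is attained by two boundary cells, it suffices to bound the distance between two cells on the first layer. Then I would consider the closed walk $\pfad_1$ that runs along the first layer (equivalently, traces the boundary), and split it at $s$ and $t$ into the clockwise arc $\pfad_L$ and the counterclockwise arc $\pfad_R$, so that $|\pfad|\le\min(|\pfad_L|,|\pfad_R|)$ and hence $2|\pfad|\le|\pfad_1|$.

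The heart of the argument is a charging scheme that bounds $|\pfad_1|$ in terms of $E(P\!\TRIA)$. Walking clockwise around the boundary, each forward step of $\pfad_1$ is flanked by boundary edges, each $60^\circ$ right turn exposes more boundary edges, each $120^\circ$ right turn exposes still more, and each left turn exposes fewer; I would tabulate these contributions exactly as in \reflem{hexoffset}/\reflem{hexshortest} but with the triangular counts (the relevant numbers are the same ones already used implicitly in the proof of \reflem{trioffset}, where a $60^\circ$ turn changes the offset by $\ell$ edges rather than $2\ell$). Pairing each left turn against a right turn, the average contribution per step is a fixed number of edges, and by the same turning-number observation used twice already (six more $60^\circ$ right turns than left turns, counting a $120^\circ$ turn twice) there is a constant surplus. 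Combining the per-step count with this surplus gives a bound of the form $|\pfad_1|\le 2E(P\!\TRIA)-6$, whence $2|\pfad|\le 2E(P\!\TRIA)-6$ and therefore $|\pfad|\le E(P\!\TRIA)-3$.

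The main obstacle I anticipate is getting the edge-accounting constants exactly right in the triangular case. Unlike squares and hexagons, a triangular cell has only three edges, and adjacent boundary cells alternate orientation, so a "forward step" along the boundary does not always expose the same number of polygon edges; one must be careful about which edges of which cell are counted, and about not double-counting edges shared between consecutive boundary cells. I would handle this by fixing, once and for all, a convention that assigns each boundary edge to the unique boundary cell it bounds and to a specific position in the clockwise traversal, then verifying the four cases (forward, $60^\circ$ right, $120^\circ$ right, left) against a small picture. A secondary subtlety is the reduction to boundary endpoints when $P\!\TRIA$ is not convex: if a shortest path between two boundary cells would want to cut through the interior, one must argue that rerouting it along the first layer does not increase its length beyond the stated bound — but since we are only claiming an upper bound via the boundary walk $\pfad_1$, and $|\pfad|\le\min(|\pfad_L|,|\pfad_R|)$ holds regardless of whether the true shortest path uses interior cells, this causes no real difficulty. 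The "$-3$" (rather than "$-\tfrac32$" as for hexagons) is exactly the signature of the triangular grid's coarser geometry, and the proof should make clear why one cannot save the additional edges here.
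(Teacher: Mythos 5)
Your proposal takes a genuinely different route from the paper, and the route you choose leaves its key step unproved. The paper does \emph{not} argue via a boundary walk at all for the triangular case: it looks at the corridor $P'$ formed by the cells that the shortest path $\Pi$ itself visits, shows $E(P')\leq E(P\!\TRIA)$ by projecting each edge of $P'$ along one of the two available grid axes onto the boundary of $P\!\TRIA$ (using the minimality of $\Pi$ to show that no boundary edge can be charged twice --- a double charge would let one shortcut $\Pi$), and then finishes with the exact identity $C(K)=E(K)-2$ for width-$1$ corridors, giving $|\Pi|=C(P')-1=E(P')-3\leq E(P\!\TRIA)-3$. This avoids the reduction to boundary endpoints, the first-layer cycle, and all turn counting.

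The gap in your version is the inequality $|\pfad_1|\leq 2E(P\!\TRIA)-6$, which you assert but do not derive, and which cannot be obtained by transplanting the hexagonal tabulation. In \reflem{hexshortest} every step of the first-layer walk contributes at least $2$ boundary edges (a hexagonal boundary cell in a straight stretch exposes two of its six edges, and no two hexagons touch only at a corner), so one divides the edge count by $2$ to bound the walk. In the triangular grid the accounting is inverted: along a straight stretch the first layer alternates between cells with one boundary edge and cells that touch the boundary only at a vertex and expose \emph{no} edge, so a step contributes on average only half an edge, and the bound you need is of the form $E\geq \frac12|\pfad_1|+3$ rather than $E\geq 2|\pfad_1|+6$. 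The constant $+3$ has to come out of a turn count in which convex and reflex vertices change the number of first-layer cells per boundary edge, and one must also handle walks that revisit cells (in a width-$1$ corridor the first-layer walk traverses every interior cell twice, and there $|\pfad_1|=2E-6$ holds with equality, so there is no slack to absorb sloppiness). You correctly flag this as ``the main obstacle,'' but pointing to the constants in \reflem{trioffset} does not resolve it: those count edges lost per offset layer, not boundary edges per step of the walk. Until that charging scheme is actually carried out and verified in all four turn cases, including revisited cells, the proof is incomplete; the paper's projection argument is the cleaner way around exactly this difficulty.
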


\pstexfig{(i) Possible projections of an edge of $P'$ (bold, dashed)
onto the edges of $P\!\TRIA$,
(ii) the cell $c$ cannot project its northwestern edge onto the boundary
of $P\!\TRIA$, because $c_1$ and $c_2$ are already charging the
boundary edges of $P\!\TRIA$.}
{figs/TriProjections12}

\begin{proof}
Let $P'$ be the grid polygon that is defined by the cells that $\Pi$
visits. We prove:  $E(P') \leq E(P)$; that is, the number of edges
around the path cannot exceed the number of edges in $P\!\TRIA$.
Thus, we have to find for every edge in $P'$ a corresponding edge
in $P\!\TRIA$: We project every edge, $e$, of a cell, $c$, in $P'$
along the axes of the grid onto the boundary of $P\!\TRIA$---provided that
$e$ is not already an edge in $P\!\TRIA$.
For every edge in $P'$ there are two possible axes for the projection,
see \reffig{figs/TriProjections12}(i).

Now, let us assume, we would doubly charge one edge in $P\!\TRIA$.
Then there is an edge, $e$, of a cell, $c$, of $\Pi$ where both
axes are blocked by other cells, $c_1$ and $c_2$ on $\Pi$, that both
project edges onto the boundary of $P\!\TRIA$,
see \reffig{figs/TriProjections12}(ii).

Now, we have two cases: Either, $c$ lies in $\Pi$ between $c_1$ and $c_2$,
or outside the path segment between $c_1$ and $c_2$ (in this case,
let $c_2$ lie between $c_1$ and~$c$).

\pstexfig{(i) $c$ lies in $\Pi$ between $c_1$ and $c_2$, 
(ii) $c_2$ lies in $\Pi$ between $c$ and $c_1$.
In both cases the path can be shortened (dashed).}
{figs/TriProjections3}

In the first case, we can shorten the path between $c_1$ and $c_2$ by
moving straight from $c_1$ to the cell $c'$ that is adjacent to $c$ 
and $e$ ($c'$ must be a free cell in $P\!\TRIA$; otherwise, $e$ would
be an edge in $P\!\TRIA$); see \reffig{figs/TriProjections3}(i).
In the other case, we can shorten the path, too: We replace the path
between $c_1$ and $c$ over $c_2$ by the straight path from $c_1$ to $c$;
see \reffig{figs/TriProjections3}(ii).
(The straight path between $c_1$ and $c$ must be in $P\!\TRIA$; otherwise, 
there would be a blocked cell and an edge; thus, we would be able to project 
$e$ onto the boundary of $P\!\TRIA$). Altogether, we can shorten $\Pi$.
This is a contradiction to the assumption that $\Pi$ is a shortest path.
Thus, for every edge, $e'$ in $P\!\TRIA$ there is at most one edge in $P'$
that is projected onto $e'$ (note that we project only from the interior
of $P$ to the boundary); that is, $E(P') \leq E(P)$ holds.

\medskip
In a corridor $K$ of width 1 we have 
$C(K)=E(K)-2$ (this equation holds for a single, triangular cell, and for
every further cell in a corridor of with 1 we add one cell and 
precisely one edge). $P'$ is such a corridor. Thus, we have
$$|\Pi| = C(P') - 1 = E(P') - 3 \leq E(P) - 3\,.$$
\end{proof}

\bigskip\noindent
Now, we are able to show our main theorem:
\begin{theo}{trimain}
Let $P\!\TRIA$ be a simple triangular grid polygon with 
$C(P\!\TRIA)$ cells and $E(P\!\TRIA)$ edges. 
$P\!\TRIA$ can be explored with
$$S(P\!\TRIA) \leq C(P\!\TRIA)+ E(P\!\TRIA)-4$$
steps. This bound is tight.
\end{theo}
\begin{proof}
The outline of this proof is the same as in \reftheo{hexmain}.
We show by an induction on the number of components 
that $\excess(P\!\TRIA) \leq E(P\!\TRIA)-4$ holds.

For the induction base we consider a polygon without any split cell which
can be explored in $C(P\!\TRIA)-1$ steps.
By \reflem{trishortest} the shortest path back to the start cell
is bounded by $E(P\!\TRIA)-3$; thus,
$\excess(P\!\TRIA)\leq E(P\!\TRIA)-4$ holds.

Now, let $c$ be the first split cell detected in $P\!\TRIA$. 
When reaching $c$, we have the components $K_1$ and $K_2$; we
explore $K_2$ first.
$P_1$, $P_2$, and $Q$ are defined as earlier.
As shown in \reflem{tricomponent} we have
$$\excess(P\!\TRIA)\leq \excess(P_1)+ \excess(K_2\cup\{c\})+1\; .$$
Now, we apply the induction hypothesis to $P_1$ and $K_2\cup\{c\}$
and get
$$\excess(P\!\TRIA)\leq E(P_1)-4+ E(K_2\cup\{c\})-4 +1\; .$$
Applying \reflem{trioffset} to the $q$-offset $K_2\cup\{c\}$ of $P_2$ 
yields 
$$\excess(P\!\TRIA) \leq E(P_1)+E(P_2)-6q-7$$

From \reflem{trioffset} we conclude $E(Q) = 6q+3$ (A triangle has 3 edges 
and $Q$ gains 6 edges per layer). With \reflem{triedgecount}
we have $\excess(P\!\TRIA)\leq E(P)-4$.

\noindent
This bound is exactly achieved in a corridor of width 1.
\end{proof}

\sssect{Competitive Factor}{cftria}~\\
Again, we also want to analyze our strategy
in the competitive framework. As in the preceding section,
{\em narrow passages} (i.e., corridors of width 1 or 2)
are explored optimally, so we consider only polygons without
such narrow passages or split cells in the first layer.
Analogously to \reflem{hexNoNarrow} (see \reffig{figs/triCompLayer}(i)) and
\reflem{hexStepNoNarrow} we can show:

\begin{lem}{triNoNarrow}
For a simple grid polygon, $P\!\TRIA$, with $C(P\!\TRIA)$ cells 
and $E(P\!\TRIA)$ edges, and without
any narrow passage or split cells in the first layer, we have
$$E(P\!\TRIA) \leq \frac13\,C(P\!\TRIA)+\frac{14}3\,.$$
\end{lem}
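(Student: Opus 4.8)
The plan is to mirror the proof of \reflem{hexNoNarrow} exactly, adjusting only the arithmetic of the decomposition step for triangular cells. I would start by fixing a simple triangular polygon $P\!\TRIA$ that has neither narrow passages nor split cells in the first layer, and argue — as before — that under these hypotheses one can always locate a strip of at least three consecutive boundary cells forming either a straight segment along one grid axis or two segments meeting at a $60^\circ$ angle, whose removal leaves a polygon that still satisfies the same hypotheses. The key local observation to justify this is that the absence of narrow passages (corridors of width $1$ or $2$) and of first-layer split cells is precisely what guarantees such a removable row/column exists; if no such strip could be found, the boundary would have to pinch down to width $\le 2$ somewhere, i.e.\ a narrow passage, or disconnect the interior, i.e.\ a first-layer split cell.

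Next I would do the edge/cell accounting for one removal step. In a triangular grid, peeling off a straight row of $k\ge 3$ boundary triangles removes $k$ cells while removing at most one net edge — because in a width-$1$ triangular corridor each added cell contributes exactly one new edge, so $C=E-2$ there, and a straight boundary strip behaves locally like such a corridor as far as the net change in $E$ is concerned; the $60^\circ$-bend case removes at most one more edge than the straight case. So, to be safe, each removal step removes at least three cells and at most... here I must be careful about the exact constant. The target inequality $E \le \frac13 C + \frac{14}3$ suggests the base case is again the seven-cell "honeycomb" of triangles: a central triangle together with the six triangles around it — wait, one should instead identify the minimal polygon with the claimed property. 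I would take the minimal such polygon, check directly that it satisfies $E=\frac13 C+\frac{14}3$ with equality (so $3E = C+14$), and then argue that each reverse step (adding back a strip) adds at least three cells while adding at most one edge, so that $3(\Delta E)\le 3 \le \Delta C$, i.e.\ the quantity $C-3E$ never decreases as we rebuild $P\!\TRIA$; hence $C-3E \ge -14$ throughout, which is the claim.

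The main obstacle, as in the hexagonal case, is the combinatorial lemma that a removable strip of $\ge 3$ cells always exists under the stated hypotheses, and pinning down the \emph{exact} worst-case edge change per strip (one net edge for a straight strip, and verifying the bend case does not do worse relative to the number of cells removed) so that the constants $\frac13$ and $\frac{14}3$ come out sharp rather than merely up to an additive slack. I expect the cleanest route is to phrase the per-step bound as: removing a strip deletes $\Delta C \ge 3$ cells and $\Delta E \le \Delta C/3 \cdot 1$... more precisely, since a straight strip deletes $\Delta C\ge 3$ cells but only $1$ edge, and $1 \le \Delta C /3$ exactly when $\Delta C\ge 3$, the inequality $\Delta E \le \frac13\Delta C$ holds per step; summing over all steps from the minimal polygon (where $E_0=\frac13 C_0+\frac{14}3$) up to $P\!\TRIA$ gives $E(P\!\TRIA)\le \frac13 C(P\!\TRIA)+\frac{14}3$. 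I would present the figure reference \reffig{figs/triCompLayer}(i) for the minimal configuration, verify its two numbers by hand, and leave the strip-existence argument at the same level of detail as \reflem{hexNoNarrow}.
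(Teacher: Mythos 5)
Your overall plan---peel off boundary strips, keep the invariant $E \leq \frac13 C + \frac{14}{3}$, and anchor the induction at a minimal polygon---is exactly the route the paper intends: it gives no separate proof of this lemma and simply asserts that it follows ``analogously to'' \reflem{hexNoNarrow}. So the approach matches. The problem is that the one step you would actually have to supply---the per-strip edge accounting---is where your argument has a genuine gap, and you have papered over it with a heuristic that is false. You claim a straight strip of $k\ge 3$ boundary triangles can be removed at a net cost of at most one edge ``because a strip behaves locally like a width-$1$ corridor, where $C=E-2$.'' That analogy does not give the right count. For a path-connected strip of $k$ triangles attached to the rest of the polygon along $t$ shared edges, the strip exposes $k+2-t$ edges on the perimeter and hides $t$, so its removal decreases $E$ by exactly $k+2-2t$. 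The invariant $3E-C\le 14$ is preserved only if $k+2-2t\le \frac{k}{3}$, i.e.\ $t\ge \frac{k}{3}+1$. A three-cell strip attached along a single edge loses $3$ edges while losing only $3$ cells and breaks the invariant; so does the bottom row of a large triangle of side $n\le 4$ (there $k=2n-1$, $t=n-1$, and the loss is always exactly $3$ edges). Concretely, the full triangles of side $3$ through $6$ all violate $3E-C\le 14$, and the lemma survives only because those polygons contain narrow passages in the sense of the paper's definition. So the combinatorial heart of the proof is precisely the claim that the hypotheses (no narrow passages, no first-layer split cells) always yield a removable strip with $t\ge\frac{k}{3}+1$ attachment edges; this is strictly harder in the triangular grid---where each cell has only three edges and consecutive cells in a row alternate orientation---than the corresponding ``at most four edges per three hexagons'' claim, and your proposal neither states nor proves it.

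Two smaller points. First, you leave the base case unidentified; you should verify that the minimal polygons of \reffig{figs/triCompLayer}(i) satisfy the bound, e.g.\ the side-$2$ triangle with $C=4$ and $E=6$ attains $\frac13\cdot 4+\frac{14}{3}=6$ with equality, and check that these minimal configurations really are free of narrow passages under the paper's layer-based definition (this is sensitive to whether corner-touching cells count as first-layer cells). Second, your remark that ``the $60^\circ$-bend case removes at most one more edge than the straight case'' goes in the wrong direction for your invariant---if a bent strip of three cells costs even one more edge than a straight one, the bound $\Delta E\le\frac13\Delta C$ fails; in fact the bend is what can \emph{raise} $t$ and thus save edges, and that should be said explicitly.
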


\pstexfig{(i) Minimal polygons that have neither narrow passages nor split
cells in the first layer, 
(ii) for polygons without narrow passages or split cells in the first 
layer, the last explored cell, $c'$, lies in the 1-offset, $P'$ (shaded).}
{figs/triCompLayer}

\begin{lem}{triStepNoNarrow}
A simple triangular grid polygon, $P\!\TRIA$, with $C(P\!\TRIA)$ cells 
and $E(P\!\TRIA)$ edges, and without
any narrow passage or split cells in the first layer can be 
explored using no more steps than
$$S(P\!\TRIA) \leq C(P\!\TRIA) + E(P\!\TRIA) - 6\,.$$
\end{lem}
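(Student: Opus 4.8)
The plan is to imitate the proof of \reflem{hexStepNoNarrow} step for step, swapping every hexagonal ingredient for its triangular counterpart. The natural starting point is \reftheo{trimain}, which already gives $S(P\!\TRIA)\le C(P\!\TRIA)+E(P\!\TRIA)-4$; inspecting that proof, the one place where it is not tight is the estimate of the walk back from the last explored cell to $s$, where it just invokes the worst-case bound $E(P\!\TRIA)-3$ of \reflem{trishortest}. So the whole task is to show that, under the two extra hypotheses, that closing walk is at least two steps shorter, which turns the $-4$ into the claimed $-6$.

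First I would argue that, because $P\!\TRIA$ has no narrow passage and no split cell in the first layer, SmartDFS\TRIA\ finishes the entire first layer before it ever descends into the $1$-offset $P'$: the left-hand rule traces the boundary ring, and since no boundary cell is a split cell the set of unvisited cells does not break apart while the ring is being traced, while the absence of width-$1$ and width-$2$ corridors rules out the situations that would force the agent off the boundary early. Consequently the last explored cell $c'$ lies in $P'$, the set $P'$ is connected, and if $s'$ denotes the cell through which SmartDFS\TRIA\ first enters $P'$ then $s'$ is a neighbor of $s$. Hence the closing walk $\pfad$ from $c'$ to $s$ splits as $|\pfad|\le|\pfad'|+1$, where $\pfad'$ is a shortest $c'$--$s'$ path inside $P'$.

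Next I would feed in the two structural facts that are already available. Since \reflem{trishortest} bounds the shortest path between an arbitrary pair of cells, it applies to $c'$ and $s'$ inside $P'$ even though $c'$ need not sit in the first layer of $P'$, giving $|\pfad'|\le E(P')-3$; and \reflem{trioffset} with $\ell=1$ gives $E(P')\le E(P\!\TRIA)-6$. Combining, $|\pfad|\le E(P\!\TRIA)-8$, comfortably below the value $E(P\!\TRIA)-3$ that \reftheo{trimain} charges for this walk, so substituting the sharper estimate into the induction base of \reftheo{trimain} yields $S(P\!\TRIA)\le C(P\!\TRIA)+E(P\!\TRIA)-6$. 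The arithmetic is routine, exactly parallel to \reflem{hexStepNoNarrow}.

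The step I expect to be the real obstacle is reconciling this ``last-cell-in-the-$1$-offset'' argument with the recursion on split cells underlying \reftheo{trimain}. When $P\!\TRIA$ has split cells in deeper layers, the closing walk to $s$ lives inside the innermost member of the decomposition $P_1,P_2,Q$ that still contains $s$, and I have to make sure that the hypothesis ``no split cell in the first layer'' survives the cuts---otherwise the two-step gain could in principle be cancelled by a split cell that one of the cuts creates in the first layer of $P_1$. The cleanest route I would try first is to observe that every recursive step of \reftheo{trimain} is slack (the $+1$ for the doubly-visited split cell is absorbed with room to spare once \reflem{trioffset} and $E(Q)=6q+3$ are used), so it suffices to establish the improvement in the split-cell-free base case and then carry that slack untouched through the recursion; making ``slack with room to spare'' precise at each recursive step is the one computation I would have to watch carefully.
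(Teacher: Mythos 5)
Your route is the paper's own: keep the induction of \reftheo{trimain} untouched and sharpen only the estimate for the closing walk from the last explored cell $c'$ back to $s$, using that $c'$ lies in the 1-offset $P'$ and combining \reflem{trishortest} with \reflem{trioffset}. But there is one concrete error, and it sits exactly at the spot where the triangular case genuinely differs from the hexagonal one. You assert that the first cell $s'$ of $P'$ visited by the agent is a neighbour of $s$, hence $|\pfad|\leq|\pfad'|+1$. That is the hexagonal fact; it fails on the triangular grid. Because triangles have only three neighbours and alternate orientation, the cell through which SmartDFS\TRIA\ first enters the 1-offset is three steps away from $s$ inside $P\!\TRIA$ (this is precisely what \reffig{figs/triCompLayer}(ii) illustrates), so the correct estimate is $|\pfad|\leq|\pfad'|+3$. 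Consequently your intermediate bound $|\pfad|\leq E(P\!\TRIA)-8$ is not justified; what actually holds is $|\pfad|\leq\bigl(E(P')-3\bigr)+3\leq E(P\!\TRIA)-6$. The lemma survives, since $E(P\!\TRIA)-6$ is still at least two better than the $E(P\!\TRIA)-3$ charged in \reftheo{trimain} and so still yields $S(P\!\TRIA)\leq C(P\!\TRIA)+E(P\!\TRIA)-6$; but as written your proof rests on a false adjacency claim and derives a stronger bound than is true. This is the one triangular ingredient you forgot to swap in.

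Two smaller remarks. First, your appeal to \reflem{trishortest} for the pair $c'$, $s'$ inside $P'$ is fine and matches the paper. Second, your closing worry about pushing the improved return-path bound through the split-cell recursion of \reftheo{trimain} is a legitimate subtlety, but the paper does not resolve it either: its proof simply substitutes the sharper return-path estimate into the argument of \reftheo{trimain} exactly as you propose. So relative to the paper the only defect to repair is the $+1$ versus $+3$.
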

\begin{proof}
The proof is analogously to \reflem{hexStepNoNarrow}, but we 
have to count three step for the path from $s'$ to $s$; see
\reffig{figs/triCompLayer}(ii). Thus, we have
$|\pfad| \leq |\pfad'| + 3$.
With $|\pfad'| \leq E(P')- 3$ (by \reflem{trishortest}) and
$E(P') \leq E(P\!\TRIA) - 6$ (by \reflem{hexoffset}), 
we get $|\pfad| \leq E(P\!\TRIA) - 6$,
which is two steps shorter than the one used in the proof of
\reftheo{trimain}.
\end{proof}

\medskip
%\pagebreak %@@@

\noindent
Now, we can prove the following
\begin{theo}{tricomp}
The strategy {SmartDFS\TRIA} is $\frac43$-competitive.
\end{theo}
\begin{proof}
In line with \reftheo{hexcomp}, 
we remove all narrow passages from the given polygon, $P\!\TRIA$,
and get a sequence of (sub-)polygons
$P_i$, $i=1,\ldots,k$, without narrow passages.
Again, we consider such a (sub-)polygon $P_i$ and show by an induction on the
number of split cells in the first layer that
$S(P_i) \leq \frac43C(P_i) -\frac43$ holds. 
%% Note that this is exactly
%% achieved in polygons of 
%% size $3\times m$, $m$ even, see \reffig{figCells/sdfscomp}.

%% \pstexfig{In a corridor of width 3 and even length,
%% $S(P)=\frac43\, \rmsub{S}{Opt}(P)-2$ holds.}
%% {figCells/sdfscomp}

With no split cell in the first layer, 
we can apply 
\reflem{triNoNarrow} and \reflem{triStepNoNarrow}:
$S(P_i)
\leq C(P_i)+E(P_i)-6
\leq C(P_i)+\frac13 C(P_i)+\frac{14}3-6
= \frac43 C(P_i) - \frac43$.

\pstexfig{Three cases of split cells, (i) component of type II, (ii) 
and (iii) component of type I.}
{figs/triSplitComp}

If we meet a split cell, $c$, in the first layer, we have two cases.
Either, the new component was never visited before (see 
\reffig{figs/triSplitComp}(i)), or we touch a cell, 
$c'$, that was already visited,
see for example \reffig{figs/triSplitComp}(ii) and (iii).
In the first case, let $Q := \{c\}$; in the second case
let $Q$ enclose the shortest path from $c$ to $c'$.

Similar to \reftheo{hexcomp}, we split the polygon.
In each case, we have $C(P_i) = C(P')+C(P'')-|Q|$ and
$S(P_i) = S(P')+S(P'')-2(|Q|-1)$, because $Q$ is a corridor of with 1.
Applying the induction hypothesis to
$P'$ and $P''$ yields
\begin{eqnarray*}
S(P_i) & = & S(P')+S(P'')-2|Q|+2\\
&\leq & \frac43\,C(P')- \frac43 + \frac43\,C(P'')- \frac43-2|Q|+2\\
&=&  \frac43\,C(P_i)-\frac23|Q| - \frac23 \quad 
\leq \quad \frac43\,C(P_i)-\frac43\quad(|Q|\geq 1)\,.
\end{eqnarray*}

The optimal strategy needs at least $C$ steps, which, altogether, yields a 
competitive factor of $\frac43$. 
The factor is achieved in a polygon as shown in 
\reffig{figs/triComp}: for every two rows of 12 cells in the middle, 
SmartDFS\TRIA\ needs 16 steps. Additionally, we need 10 steps for the 
first and the last row of 5 cells each. Hence, for $2n+2$ rows
we have ${10+16n\over 10+12n}$ which converges to $\frac43$.
\end{proof}

\pstexfig{A polygon where the competitive factor of SmartDFS\TRIA\
is achieved exactely.}
{figs/triComp}

%======================================================================
\ssect{Summary}{summary}
%======================================================================

We considered the online exploration of hexagonal and
triangular grid polygons and adapted the strategy SmartDFS.

For hexagonal polygons we gave a lower bound of $\frac{14}{13}$ and 
showed that SmartDFS\HEXA\ explores
polygons with 
$C$ cells and $E$ edges using no more than $C+\frac14 E-2.5$ steps.
For triangular polygons we have a lower bound of $\frac76$ (matching the
lower bound for square polygons) and an upper bound of
$C+E-4$ on the number of steps. Further, we showed that both
strategies are $\frac43$-competitive, and that the analysis was tight
(i.e., there are polygons where a factor of $\frac43$ is
exactly achieved).

An interesting observation is that---although the all three problems
appear to be the same at first sight---there are some subtle differences,
that are caused by the differences in the 'connectivity' of the 
grids: there are no touching cells in hexagonal grids, which seems to make
the problem easier (and---in turn--- makes it harder to find a
lower bound). On the other hand, the lower number of neighboring cells 
in triangular grids allows the assumption that there are more steps
needed in these kind of polygons. Indeed, we 'need' all edges in 
triangular polygons for the upper bound on the number of steps, while
in square and hexagonal polygons a fraction of the edges is sufficient.
In this connection, one may ask whether it is appropriate to 
consider---basically---the same strategy for all types of grids, or
one should regard more grid-specific details in the design of an 
exploration strategy.

An interesting open problems is how to close the gap
between the upper bound and the lower bound on the competitive factors.

%\begin{multicols}{2}
%\footnotesize
\bibliographystyle{abbrv}
\bibliography{%
        ../../../abt1/biblio/local,%
        ../../../abt1/biblio/update,%
        ../../../abt1/biblio/geom}

\begin{thebibliography}{10}

\bibitem{aks-eueo-02}
S.~Albers, K.~Kursawe, and S.~Schuierer.
\newblock Exploring unknown environments with obstacles.
\newblock {\em Algorithmica}, 32:123--143, 2002.

\bibitem{afimmmprx-nbsts-07}
E.~Arkin, S.~Fekete, K.~Islam, H.~Meijer, J.~S. Mitchell, Y.~N{\'u}{\~n}ez,
  V.~Polishchuk, D.~Rappaport, and H.~Xiao.
\newblock Not being (super)thin or solid is hard: A study of grid
  hamiltonicity.
\newblock submitted, 2007.

\bibitem{afm-aalmm-00}
E.~M. Arkin, S.~P. Fekete, and J.~S.~B. Mitchell.
\newblock Approximation algorithms for lawn mowing and milling.
\newblock {\em Comput. Geom. Theory Appl.}, 17:25--50, 2000.

\bibitem{amp-tnchg-07}
E.~M. Arkin, J.~S.~B. Mitchell, and V.~Polishchuk.
\newblock Two new classes of hamiltonian graphs.
\newblock In {\em Proc. Europ. Conf. Comb. Graph Theo. Appl.}, volume 29C of
  {\em e-Notes Discr. Math}, pages 565--569, 2007.

\bibitem{a-ptase-96}
S.~Arora.
\newblock Polynomial time approximation schemes for {Euclidean} {TSP} and other
  geometric problems.
\newblock In {\em Proc. 37th Annu. IEEE Sympos. Found. Comput. Sci.}, pages
  2--11, 1996.

\bibitem{bc-mre-87}
W.~W.~R. Ball and H.~S.~M. Coxeter.
\newblock {\em Mathematical Recreations and Essays}.
\newblock Dover Publications, New York, 13 edition, 1987.

\bibitem{bs-eewl-03}
M.~A. Batalin and G.~S. Sukhatme.
\newblock Efficient exploration without localization.
\newblock In {\em Proc. IEEE Internat. Conf. Robot. Autom.}, 2003.

\bibitem{b-olsn-98}
P.~Berman.
\newblock On-line searching and navigation.
\newblock In A.~Fiat and G.~Woeginger, editors, {\em Competitive Analysis of
  Algorithms}. Springer-Verlag, 1998.

\bibitem{brs-plue-95}
M.~Betke, R.~L. Rivest, and M.~Singh.
\newblock Piecemeal learning of an unknown environment.
\newblock {\em Machine Learning}, 18(2--3):231--254, 1995.

\bibitem{blw-dcaru-99}
A.~M. Bruckstein, M.~Lindenbaum, and I.~A. Wagner.
\newblock Distributed covering by ant-robots using evaporating traces.
\newblock {\em IEEE Trans. Robot. Autom.}, 15:918--933, 1999.

\bibitem{wlb-mvpdr-00}
A.~M. Bruckstein, M.~Lindenbaum, and I.~A. Wagner.
\newblock Mac vs. pc - determinism and randomness as complementary approaches
  to robotic exploration of continuous unknown domains.
\newblock {\em Internat. J. Robotics Res.}, 19(1):12--31, 2000.

\bibitem{c-crsrr-01}
H.~Choset.
\newblock Coverage for robotics - {A} survey of recent results.
\newblock {\em Ann. Math. Artif. Intell.}, 31:113--126, 2001.

\bibitem{clhkbk-prmta-05}
H.~Choset, K.~M. Lynch, S.~Hutchinson, G.~A. Kantor, W.~Burgard, L.~E. Kavraki,
  and S.~Thrun.
\newblock {\em Principles of Robot Motion: Theory, Algorithms, and
  Implementations}.
\newblock MIT Press, Boston, 2005.

\bibitem{dkp-hlue1-98}
X.~Deng, T.~Kameda, and C.~Papadimitriou.
\newblock How to learn an unknown environment {I}: The rectilinear case.
\newblock {\em J. ACM}, 45(2):215--245, 1998.

\bibitem{dmr-mrcre-01}
G.~Dudek, E.~Milios, and I.~M. Rekleitis.
\newblock Multi-robot collaboration for robust exploration.
\newblock {\em Ann. Math. Artif. Intell.}, 31:7--40, 2001.

\bibitem{e-uogmr-89}
A.~Elfes.
\newblock Using occupancy grids for mobile robot perception and navigation.
\newblock {\em IEEE Computer}, 22(6):46--57, 1989.

\bibitem{e-hpnrg-86}
H.~Everett.
\newblock Hamiltonian paths in non-rectangular grid graphs.
\newblock Report 86-1, Dept. Comput. Sci., Univ. Toronto, Toronto, ON, 1986.

\bibitem{gr-colcg-03}
Y.~Gabriely and E.~Rimon.
\newblock Competitive on-line coverage of grid environments by a mobile robot.
\newblock {\em Comput. Geom. Theory Appl.}, 24:197--224, 2003.

\bibitem{gkp-aspgt-95}
M.~Grigni, E.~Koutsoupias, and C.~H. Papadimitriou.
\newblock An approximation scheme for planar graph {TSP}.
\newblock In {\em Proc. 36th Annu. IEEE Sympos. Found. Comput. Sci.}, pages
  640--645, 1995.

\bibitem{hiklm-gaesu-00}
U.~Handel, C.~Icking, T.~Kamphans, E.~Langetepe, and W.~Meiswinkel.
\newblock Gridrobot --- an environment for simulating exploration strategies in
  unknown cellular areas.
\newblock Java Applet, 2000.
\newblock http://www.geometrylab.de/Gridrobot/.

\bibitem{hikk-pep-01}
F.~Hoffmann, C.~Icking, R.~Klein, and K.~Kriegel.
\newblock The polygon exploration problem.
\newblock {\em SIAM J. Comput.}, 31:577--600, 2001.

\bibitem{hz-scthc-95}
C.~Hwan-Gue and A.~Zelikovsky.
\newblock Spanning closed trail and {Hamiltonian} cycle in grid graphs.
\newblock In {\em Proc. 6th Annu. Internat. Sympos. Algorithms Comput.}, volume
  1004 of {\em Lecture Notes Comput. Sci.}, pages 342--351. Springer-Verlag,
  1995.

\bibitem{ikkl-egpol-05}
C.~Icking, T.~Kamphans, R.~Klein, and E.~Langetepe.
\newblock Exploring grid polygons online.
\newblock Technical Report 001, Department of Computer Science I, University of
  Bonn, December 2005.
\newblock http://web.informatik.uni-bonn.de/I/publications/ikkl-egpol-05.pdf.

\bibitem{ikkl-esgp-05}
C.~Icking, T.~Kamphans, R.~Klein, and E.~Langetepe.
\newblock Exploring simple grid polygons.
\newblock In {\em 11th Internat. Comput. Combin. Conf.}, volume 3595 of {\em
  Lecture Notes Comput. Sci.}, pages 524--533. Springer, 2005.

\bibitem{imrrx-hchgg-07}
K.~Islam, H.~Meijer, Y.~N. Rodr{\'\i}guez, D.~Rappaport, and H.~Xiao.
\newblock Hamilton circuits in hexagonal grid graphs.
\newblock In {\em Proc. 19th Canad. Conf. Comput. Geom.}, pages 85--88, 2007.

\bibitem{ips-hpgg-82}
A.~Itai, C.~H. Papadimitriou, and J.~L. Szwarcfiter.
\newblock Hamilton paths in grid graphs.
\newblock {\em SIAM J. Comput.}, 11:676--686, 1982.

\bibitem{k-maole-05}
T.~Kamphans.
\newblock {\em Models and Algorithms for Online Exploration and Search}.
\newblock Dissertation, University of Bonn, 2005.
\newblock http://www.kamphans.de/k-maole-05.pdf.

\bibitem{kl-tcars-01}
S.~Koenig and Y.~Liu.
\newblock Terrain coverage with ant robots: a simulation study.
\newblock In {\em Proc. 5th internat. Conf. Auton. agents}, pages 600--607,
  2001.

\bibitem{m-gsaps-96}
J.~S.~B. Mitchell.
\newblock Guillotine subdivisions approximate polygonal subdivisions: {A}
  simple new method for the geometric {$k$-MST} problem.
\newblock In {\em Proc. 7th ACM-SIAM Sympos. Discrete Algorithms}, pages
  402--408, 1996.

\bibitem{m-spn-97}
J.~S.~B. Mitchell.
\newblock Shortest paths and networks.
\newblock In J.~E. Goodman and J.~O'Rourke, editors, {\em Handbook of Discrete
  and Computational Geometry}, chapter~24, pages 445--466. CRC Press LLC, Boca
  Raton, FL, 1997.

\bibitem{m-gspno-00}
J.~S.~B. Mitchell.
\newblock Geometric shortest paths and network optimization.
\newblock In J.-R. Sack and J.~Urrutia, editors, {\em Handbook of Computational
  Geometry}, pages 633--701. Elsevier Science Publishers B.V. North-Holland,
  Amsterdam, 2000.

\bibitem{me-hrmwa-85}
H.~P. Moravec and A.~Elfes.
\newblock High resolution maps from wide angle sonar.
\newblock In {\em Proc. IEEE Internat. Conf. Robot. Autom.}, pages 116--121,
  1985.

\bibitem{n-wrlv-92}
S.~Ntafos.
\newblock Watchman routes under limited visibility.
\newblock {\em Comput. Geom. Theory Appl.}, 1(3):149--170, 1992.

\bibitem{pam-hctg-06}
V.~Polishchuk, E.~M. Arkin, and J.~S.~B. Mitchell.
\newblock Hamiltonian cycles in triangular grids.
\newblock In {\em Proc. 18th Canad. Conf. Comput. Geom.}, pages 63--66, 2006.

\bibitem{rksi-rnut-93}
N.~S.~V. Rao, S.~Kareti, W.~Shi, and S.~S. Iyengar.
\newblock Robot navigation in unknown terrains: introductory survey of
  non-heuristic algorithms.
\newblock Technical Report ORNL/TM-12410, Oak Ridge National Laboratory, 1993.

\bibitem{ul-hcsgg-97}
C.~Umans and W.~Lenhart.
\newblock Hamiltonian cycles in solid grid graphs.
\newblock In {\em Proc. 38th Annu. IEEE Sympos. Found. Comput. Sci.}, pages
  496--507, 1997.

\end{thebibliography}
%\end{multicols}

\end{document}